\definecolor{myurlcolor}{rgb}{0,0,0.7}
\definecolor{myurlcolor1}{rgb}{0,0.7,0.1}
\definecolor{myrefcolor}{rgb}{0,0,0.7}
\newcommand{\kb}[2]{\left| #1 \vphantom{#2} \right>\left< #2 \vphantom{#1} \right|} 
\newcommand{\proj}[1]{\kb{#1}{#1}} 
\renewcommand{\tr}[1]{\mathrm{Tr} ( #1 )}
\renewcommand{\dd}{\mathrm{d}}
\DeclarePairedDelimiterX{\infdivx}[2]{(}{)}{%
  #1\;\delimsize\|\;#2%
}
\newcommand{\Vol}{\mathrm{Vol}}
\newtheorem{prop}{Property}
	 \newcommand{\eq}[1]{\begin{equation}#1	\end{equation}}
\newcommand{\red}[1]{\textcolor{red}{#1}}
\newcommand{\p}{\hat{p}_1}
\begin{document}

\title{Complex-valued Wigner entropy of a quantum state}

\author{Nicolas J. Cerf}
\affiliation{Centre for Quantum Information and Communication, \'{E}cole polytechnique de Bruxelles,  CP 165, Universit\'{e} libre de Bruxelles, 1050 Brussels, Belgium}
\affiliation{James C. Wyant College of Optical Sciences, University of Arizona, Tucson, AZ 85721, USA}

\author{Anaelle Hertz}
\affiliation{National Research Council of Canada, 100 Sussex Drive, Ottawa, ON K1N 5A2, Canada}

\affiliation{Centre for Quantum Information and Communication, \'{E}cole polytechnique de Bruxelles,  CP 165, Universit\'{e} libre de Bruxelles, 1050 Brussels, Belgium}

\author{Zacharie Van Herstraeten}
\affiliation{James C. Wyant College of Optical Sciences, University of Arizona, Tucson, AZ 85721, USA}
\affiliation{Centre for Quantum Information and Communication, \'{E}cole polytechnique de Bruxelles,  CP 165, Universit\'{e} libre de Bruxelles, 1050 Brussels, Belgium}

\begin{abstract}
It is common knowledge that the Wigner function of a quantum state may admit negative values, so that it cannot be viewed as a genuine probability density. Here, we examine the difficulty in finding an entropy-like functional in phase space that extends to negative Wigner functions and then advocate the merits of defining a complex-valued entropy associated with any Wigner function. This quantity, which we call the \textit{complex Wigner entropy}, is defined via the analytic continuation of Shannon's differential entropy of the Wigner function in the complex plane. We show that the complex Wigner entropy enjoys interesting properties, especially its real and imaginary parts are both invariant under Gaussian unitaries (displacements, rotations, and squeezing in phase space). Its real part is physically relevant when considering the evolution of the Wigner function under a Gaussian convolution, while its imaginary part is simply proportional to the negative volume of the Wigner function. Finally, we define the complex-valued Fisher information of any Wigner function, which is linked (via an extended de~Bruijn's identity) to the time derivative of the complex Wigner entropy when the state undergoes Gaussian additive noise. Overall, it is anticipated that the complex plane yields a proper framework for analyzing the entropic properties of quasiprobability distributions in phase space. 
\end{abstract}

\maketitle

\section{Introduction}
When representing a quantum state with a quasiprobability distribution in phase space, the most common choices are the Glauber-Sudarshan P\mbox{-}function $P(\alpha)$, the Wigner function $W(\alpha)$, or the Husimi Q\mbox{-}function $Q(\alpha)$, where $\alpha$ is a complex variable \cite{LEE1995147}. These correspond to the normal, symmetric, and antinormal ordering of annihilation  and creation operators, respectively, and belong to a continuous family of $s$-ordered quasiprobability distributions $W_s(\alpha)$ parametrized by the real parameter $s\in [-1,1]$, see \cite{PhysRev.177.1857,PhysRev.177.1882}. Conventionally, $W_1(\alpha)=P(\alpha)$, $W_0(\alpha)=W(\alpha)$, and $W_{-1}(\alpha)=Q(\alpha)$. 
The P-function (\textit{i.e.}, $s=1$) of a state  is often not even a regular function (for instance, it is expressed in terms of derivatives of the Dirac $\delta$\mbox{-}function for a squeezed vacuum state). However, by decreasing the value of $s$ (which corresponds to convolving $P(\alpha)$ with a Gaussian distribution of increasing variance), we reach a point where $W_s(\alpha)$ becomes regular, see example in Fig.~\ref{Fig:intro}. The largest value of $s$ for which $W_s(\alpha)$ is a regular function for \emph{all} states is $s=0$, corresponding to the Wigner function $W(\alpha)$. The Wigner function is thus always regular but, as is well known, it is often not a positive function, so it cannot be treated as a genuine probability distribution. 

\begin{figure}[t!]
\centering
\includegraphics[width=\linewidth]{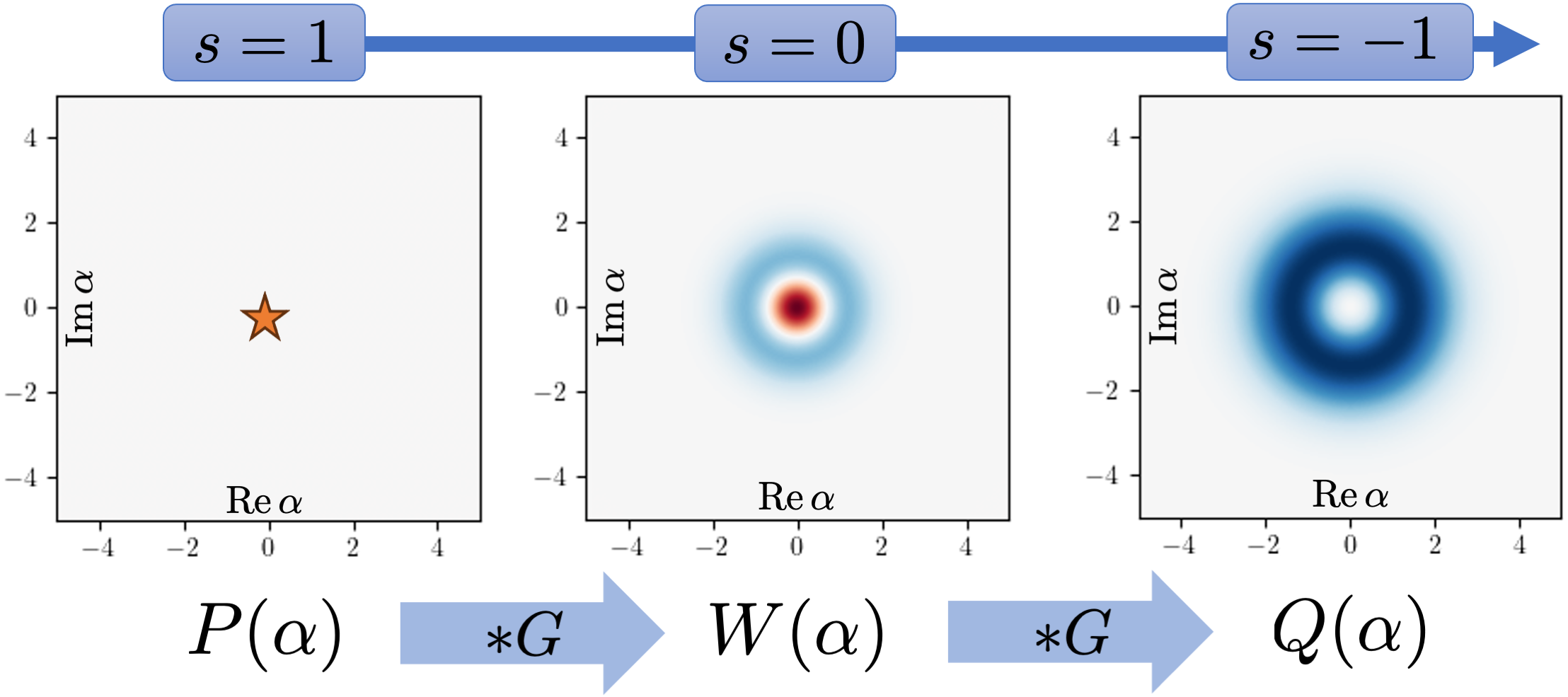}
\caption{
Phase-space representations of the Fock state $\ket{1}$.
Its Glauber-Sudarshan P-function $P(\alpha)$ is expressed in terms of the second derivative of a Dirac $\delta$-function (represented by a red star) and is thus not regular.
Smoothing $P(\alpha)$ with a Gaussian distribution of one unit of shot-noise (noted as the convolution $*G$) yields the Wigner function $W(\alpha)$, which is regular but takes negative values (red zone) near the origin.
Smoothing again the Wigner function $W(\alpha)$ with the same Gaussian distribution yields the Husimi function $Q(\alpha)$, which is both regular and non-negative.
Here, we focus on defining the complex-valued entropy of $W(\alpha)$ instead of the real-valued entropy of $Q(\alpha)$, which is the well-known Wehrl entropy.}
\label{Fig:intro}
\end{figure}

The negativities of $W(\alpha)$ can be suppressed by further decreasing $s$ below 0, which smears out the quasiprobability distribution up to the point where it becomes positive everywhere in phase space. The largest value of $s$ for which $W_s(\alpha)$ is a non-negative function for \emph{all} states is $s=-1$, corresponding to the Q-function $Q(\alpha)$, see example in Fig.~\ref{Fig:intro}. This is the only quasiprobability distribution among $W_s(\alpha)$ with $s\in[-1,1]$ that behaves as a genuine probability distribution (this is because it is the probability distribution of the outcome of a POVM measurement in the overcomplete basis of coherent states $\ket{\alpha}$, often called heterodyne or eight-port homodyne detection in quantum optics).

Defining a sensible measure of uncertainty based on the representation of a state in quantum phase space is an objective that has long been pursued, and a natural candidate for this is the differential entropy as defined in Shannon information theory \cite{Shannon1948,ShannonWeaver}. Notably, the Shannon differential entropy of the Q-function associated with a state is called the Wehrl entropy \cite{RevModPhys.50.221} and it has extensively been analyzed in the literature. For example, the Wehrl conjecture \cite{WEHRL1979353} (later proven by Lieb \cite{Lieb1978}) gives a lower bound to the Wehrl entropy of any state. Let us stress here that the Q-function is the only one quasiprobability distribution $W_s(\alpha)$ with $s\in[-1,1]$ whose Shannon differential entropy is always well defined for \textit{all} states as the entropy definition would otherwise require taking the logarithm of negative numbers for some states.

Here, we revisit the definition of the entropy of a quantum phase-space distribution $W_s(\alpha)$ by relaxing the strong requirement of having a genuine probability distribution, yet requiring the function $W_s(\alpha)$ to remain regular for all states. The most natural choice is then the Wigner function $W(\alpha)$, which also enjoys numerous interesting properties and is by far the most commonly used phase-space representation in quantum optics \cite{mandel_wolf_1995}. Among its main properties, let us mention the phase-space area preservation under Gaussian unitaries, which does not hold for any other $W_s(\alpha)$ with $s\ne 0$. Furthermore, its marginals yield the probability distributions of the $x$ and $p$ canonical coordinates, so it can be thought of as a sort of joint $(x,p)$ probability distribution albeit violating Kolmogorov's non-negativity axiom.

It thus makes sense to define the Wigner entropy \cite{PhysRevA.104.042211} of a state as the differential Shannon entropy of $W(\alpha)$, which is 
 well behaved as long as one restricts to Wigner-positive states, \textit{i.e.}, states such that $W(\alpha)\ge 0$, $\forall \alpha\in\mathbb{C}$. A great merit of this definition is that it inherits all nice properties of Wigner functions. For example, the Wigner entropy is invariant under Gaussian unitaries as a consequence of the area-preservation property (notice that such an invariance does not hold for the Wehrl entropy, nor for any other $W_s(\alpha)$ with $s<0$).  In fact, as long as the considered states are Wigner-positive, all results of Shannon information theory remain applicable to $W(\alpha)$ since the latter is a true probability distribution (see, \textit{e.g.}, the Wigner entropy-power inequality \cite{PhysRevA.104.042211} for a beam splitter). The Wigner entropy is thus also a lower bound to the sum of marginal entropies of $x$ and $p$ as a consequence of the subadditivity of Shannon entropy, which results in a stronger version of the entropic uncertainty relation of Białynicki-Birula and Mycielski \cite{BialynickiBirula1975}, see \cite{Hertz_2017} or \cite{Hertz_2019} for a review. However, extending this entropic uncertainty measure to states with negative Wigner functions appears precluded because it inherently comes with logarithms of negative numbers.

In the present work, we start by exploring a few possible ways of adapting the definition of Wigner entropy to Wigner-negative states, \textit{i.e.}, states such that $\exists \alpha\in\mathbb{C} : W(\alpha)<0$. Unfortunately, no good alternative real-valued functional  emerges that obeys all desirable properties of an entropy measure. Then, we advocate that the same functional should be kept when considering Wigner-negative states by resorting to the analytic continuation of the Shannon entropy in the complex plane. As we shall see, this leads us to define a complex-valued Wigner entropy, whose real and imaginary parts both have a physical meaning (the real part has an interpretation close to that of the ordinary real-valued entropy, while the imaginary part is simply proportional to the negative volume of the Wigner function). Interestingly, both the real and imaginary parts of the complex Wigner entropy  remain invariant under Gaussian unitaries. Finally, as an additional argument supporting this continuation in the complex plane, we consider the complex extension of de~Bruijn's identity, which leads us to define a complex-valued Fisher information associated with any Wigner function.

This paper is organized as follows. In Sec. \ref{sec:Wigner-entropy}
we recall the basics of Wigner functions and summarize the main facts about the (real-valued) Wigner entropy of Wigner-positive states \cite{PhysRevA.104.042211}. In Sec. \ref{sec:alternative} we consider some possible ways of extending this definition to Wigner-negative states while keeping a real-valued entropy. We list the desirable properties of such an extended 
entropy and discuss the results of our exploration, suggesting that a real-valued entropy functional cannot be found. In Sec. \ref{sec:Complex-Wigner}
we then make the step to extend the Wigner entropy to the complex plane. We prove some main properties of the complex Wigner entropy 
and discuss a few examples.
In Sec. \ref{sec:Complex-Fisher} we move to the complex-valued Fisher information and prove that its real (resp. imaginary) part governs the time-evolution of the real (resp. imaginary) part of the Wigner entropy under Gaussian additive noise. Finally, we discuss the perspectives of our work in Sec. 
\ref{sec:Conclusion}.

\section{Wigner entropy}
\label{sec:Wigner-entropy}

\subsection{Wigner function}

The Wigner function associated with a density operator $\hat \rho$ is defined as 
\begin{equation}\label{eq:wignerdef1}
W(x,p) = {1\over 2\pi} \int \langle x- {y\over 2} | \hat \rho | x + {y\over 2} \rangle \, e^{ipy} \, \dd y,
\end{equation}
where we have set $\hbar=1$. Note that we use from now on the $(x,p)$ canonical coordinates instead of the complex variable $\alpha=(x+ip)/\sqrt{2}$. More generally, the Weyl transform $A (x,p)$ of any linear operator $\hat A$ is defined as
\begin{equation}
A(x,p) = {1\over 2\pi} \int \langle x- {y\over 2} | \hat A | x + {y\over 2} \rangle \, e^{ipy} \, \dd y,
\end{equation}
implying that the Wigner function is simply the Weyl transform of the density operator $\hat \rho$. It is straightforward to check that
\begin{align}
\hat A = \hat A^\dagger \quad &\Leftrightarrow\quad 
A(x,p) = A^*(x,p) ,  \nonumber  \\
\mathrm{Tr} \hat A \; &=\; \int  A(x,p) \, \mathrm dx \, \mathrm dp ,
\end{align}
which immediately implies that the Wigner function $W$ of any physical state $\hat\rho$ (with $\hat \rho = \hat \rho^\dagger$ and $\mathrm{Tr} \hat \rho = 1$) is real and normalized to one. Furthermore, note that the Weyl transforms of operators $f(\hat x)$ and $g(\hat p)$ are $f(x)/2\pi$ and $g(p)/2\pi$, respectively, where $f$ and $g$ are arbitrary functions.

For any two linear operators $\hat A_1$ and $\hat A_2$, the so-called overlap formula reads
\begin{equation} \label{eq:overlapformula}
\tr {\hat A_1 \,\hat A_2}
 =  2 \pi \int A_1(x,p) \, A_2(x,p) \, \dd x \, \dd p,
\end{equation}
which implies, for example, that
\begin{eqnarray} 
& & \tr{\hat \rho}
 =  \int W(x,p) \, \dd x \, \dd p , \nonumber \\
& & \tr{\hat \rho^2}
 =  2 \pi \int W(x,p)^2 \, \dd x \, \dd p ,
\end{eqnarray}
since the Weyl transform of the identity operator $\hat \openone$ is simply the constant function $1/2\pi$. Note that  
\begin{eqnarray} 
& & \tr{\hat \rho^s}
 \ne c\,  \int W(x,p)^s \, \dd x \, \dd p , \qquad \forall s\ne 1 \text{~or~} 2,
\end{eqnarray}
for any constant $c$. Furthermore, using the overlap formula and the fact that the Weyl transform of a projector $\ket{x_0}\bra{x_0}$ in position space  is $\delta(x-x_0)/2\pi$, and, similarly, a projector $\ket{p_0}\bra{p_0}$ in momentum space is $\delta(p-p_0)/2\pi$, we get
\begin{eqnarray} && \rho_x(x_0) \coloneqq \langle x_0 | \hat \rho | x_0 \rangle 
 =  \int W(x_0,p) \, \dd p ,\nonumber \\
&& \rho_p(p_0) \coloneqq \langle p_0 | \hat \rho | p_0 \rangle 
 =  \int W(x,p_0) \, \dd x ,  
\end{eqnarray}
confirming that the marginals of $W(x,p)$ yield the probability densities of measuring the coordinates $x$ or $p$, denoted respectively as $\rho_x$ and $\rho_p$. To be complete, note also that the Weyl transform of the particle-number parity operator $(-1)^{\hat{n}}$ is $\delta(x)\delta(p)/2$, so that the overlap formula yields the well-known formula
\begin{equation}
W(0,0) = \frac{1}{\pi}\, \tr{\hat\rho \,(-1)^{\hat{n}}} , 
\label{eq:Wigner_origin}\end{equation}
connecting the expectation value of the particle-number parity with the Wigner function at the origin. Denoting the displacement operator as $\hat D_{x',p'}$, the Weyl transform of $\hat D_{x',p'}(-1)^{\hat{n}}\hat D_{x',p'}^\dagger$ becomes $\delta(x-x')\delta(p-p')/2$, so we may extend Eq.~\eqref{eq:Wigner_origin} into 
\begin{equation}
W(x',p') = \frac{1}{\pi}\, \tr{\hat D^\dagger_{x',p'}\, \hat\rho \, \hat D_{x',p'}\,(-1)^{\hat{n}}}, \end{equation}
which implies that the Wigner function at $(x',p')$ is proportional to the average particle-number parity of the   displaced state by $(-x',-p')$. Hence, we also note  that $|W(x,p)|\le 1/\pi$, $\forall x,p$. Finally, by taking the Weyl transform of the identity
\begin{align}
W(x,p)=\int W(x',p') \,\delta(x-x')\delta(p-p') \, \dd x'\, \dd p' ,
\end{align}
we obtain
\begin{align}
\hat \rho= 2 \int W(x',p') \, \hat D_{x',p'}\, (-1)^{\hat{n}}\, \hat D_{x',p'}^\dagger\, \dd x'\, \dd p' ,
\end{align}
which is the inverse Weyl transform of $W(x',p')$.

\subsection{Wigner entropy of Wigner-positive states}
\label{sec:Wigner-entropy-positive-states}

In Shannon information theory, the differential entropy of a pair of random variables $(x,p)$ is defined as \cite{CoverThomas}
\begin{equation}  \label{eq:def:h(x,p)}
h(x,p)=-\int f(x,p)\ln f(x,p) \, \dd x \, \dd p
\end{equation}
where $f(x,p)$ is the joint probability density of $x$ and $p$. Roughly speaking, it measures how spread out is the distribution $f(x,p)$. Note that $h(x,p)$ can be positive but also negative. It satisfies the following scaling property: if the two random coordinates transform as
\begin{equation}
\begin{pmatrix}x'\\p'\end{pmatrix}=\mathcal{S}\begin{pmatrix}x\\p\end{pmatrix},
\end{equation}
where $\mathcal{S}$ is a $2\times 2$ matrix, then the differential entropy is shifted by an additive constant, namely \cite{CoverThomas}
\begin{equation}
\label{eq:scaling}
h(x',p')=h(x,p)+\ln|\!\det(\mathcal{S})|.
\end{equation}


As long as a state $\hat \rho$ admits a non-negative Wigner function, that is $W(x,p)\ge 0, \forall x,p$, the latter behaves as a genuine probability distribution (it is non-negative and normalized to one) and we may compute its Shannon differential entropy. Accordingly,  the Wigner entropy of a Wigner-positive state can be defined as \cite{PhysRevA.104.042211}\footnote{This entropy was originally defined in Ref. \cite{Hertz_2017} in an attempt to find stronger entropic uncertainty relations, but it was deemed problematic in view of its extension to Wigner-negative states.}
\begin{equation}  \label{eq:def:h}
h(W)=-\int W(x,p)\ln W(x,p) \, \dd x \, \dd p .
\end{equation}
Note that we write it as a functional of $W$ here instead of $h(x,p)$, but it measures the joint uncertainty of the ($x$,$p$) pair -- or actually any pair of canonically conjugate coordinates --  in phase space as long as the system admits a non-negative Wigner function $W$. We immediately see from Eq. 
\eqref{eq:scaling} that the Wigner entropy is invariant under symplectic transformations (\textit{i.e.}, any phase shift or squeezing) since $\det(\mathcal{S})=1$ in such a case \cite{RevModPhys.84.621}. It is also trivially invariant under a displacement in phase space since $h(x+c)=h(x)$ for any constant $c\in\mathbb{R}$ \cite{CoverThomas}. Hence, $h(W)$ is invariant under all Gaussian unitaries. Note also that it takes the value $\ln \pi +1$ for the vacuum state as well as any pure Gaussian state in view of this invariance.

The Wigner entropy appears as a natural functional in order to measure the uncertainty of Wigner-positive states in phase space as it enjoys reasonable properties, most notably:
\begin{itemize}
	\item \textit{Symmetric functional} -- The Wigner entropy can be rewritten as
\begin{equation}
h(W)=\int \varphi\big(W(x,p)\big) \, \dd x \, \dd p ,
\label{eq:h(W)symmetric}
\end{equation}
where $\varphi(z)=-z \ln z$. Therefore, it is indeed a symmetric functional, which means that it is invariant under any area-preserving transformation~$\mathcal{M}$, that is:
$h(\mathcal{M}\left[W\right])=h(W)$, where $\mathcal{M}[W]$ is any transformation in phase space that keeps the level-function of $W$ unchanged \cite{VanHerstraeten2023continuous}. This property is the continuous counterpart to the fact that the discrete Shannon entropy is invariant under permutations of the probability vector. The set of area-preserving transformations $\mathcal{M}$ includes the class of Gaussian unitaries (symplectic transformations and displacements in phase space). 
 \item \textit{Concave functional} -- The Wigner entropy is a concave functional of the Wigner function, that is,
 \begin{equation}
h(\lambda_1 W_1+\lambda_2 W_2)\ge \lambda_1 \, h(W_1)+\lambda_2 \, h(W_2),
\end{equation}
where $\lambda_1,\lambda_2\ge 0$ and $\lambda_1+\lambda_2=1$.
This property is guaranteed by the fact that the function $\varphi(z)$ appearing in Eq. \eqref{eq:h(W)symmetric} is itself concave.
\item \textit{Lower bound on marginal entropies} -- The Wigner entropy is a lower bound on the sum of the entropies of the marginal distributions, namely, 
\begin{equation}
 h(W)\leq h(\rho_x)+h(\rho_p).
\label{eq:subbaditivity_wigner_entropy}
\end{equation}
This is a consequence of the subadditivity of Shannon's differential entropy \cite{CoverThomas}: the joint entropy can be written as $h(x,p)=h(x)+h(p)-I(x{\rm :}p)$, where the mutual information $I(x{\rm :}p) \ge 0$.
	\item \textit{Pure Gaussian extremal states}
	-- Although only  conjectured as of today \cite{PhysRevA.104.042211}, it is expected that the Wigner entropy of Wigner-positive states reaches its minimum over the set of pure Gaussian states, viewed as minimum-uncertainty states, that is
\begin{equation}
\label{eq:conjecture}
 \ln\pi +1 \stackrel{?}{\leq} h(W) .
\end{equation}
It must be recalled that the entropic uncertainty relation of Białynicki-Birula and Mycielski \cite{BialynickiBirula1975},
\begin{equation}
 \ln\pi +1 \leq h(\rho_x)+h(\rho_p),
 \label{eq:Bialynicki}
\end{equation}
gives a lower bound on the sum of marginal entropies which actually coincides with the Wigner entropy of pure Gaussian states. 
Provided it holds\footnote{Further progress on the proof of conjecture \eqref{eq:conjecture} has recently been reported in Ref. \cite{Dias2023-qr}, see also Ref. \cite{Zach-BS-states}.}, Eq.~\eqref{eq:conjecture} can thus be viewed as a tighter entropic uncertainty relation involving the Wigner entropy (limited to Wigner-positive states) instead of the sum of marginal entropies as in Eq. \eqref{eq:Bialynicki}.


		\end{itemize}


\section{Tentative  definitions for the entropy of Wigner-negative states}
\label{sec:alternative}

\subsection{Desirable properties}

We wish to construct a valid extension of $h(W)$ to Wigner-negative states that remains consistent with all (or most) of its properties. Let us denote the set of Wigner functions by $\mathcal{W}$, and the subset of non-negative Wigner functions by $\mathcal{W}_+$.
We seek a real-valued functional $\Phi:\mathcal{W}\mapsto\mathbb{R}$ with the following properties:
\begin{enumerate}
	\item $\forall W\in\mathcal{W}_+:\quad \Phi(W)=h(W)$
\item $\forall W\in\mathcal{W}:\quad\  \ \Phi(W)$ is a symmetric functional
\item $\forall W\in\mathcal{W}: \quad\  \  \Phi(W)$ is a concave functional
		\item $\forall W\in\mathcal{W}:\quad\  \  \Phi(W) \leq h(\rho_x)+h(\rho_p) $
	\item $\forall W\in\mathcal{W}:\quad\  \  \ln\pi+1 \leq \Phi(W)$.
\end{enumerate}

\subsection{Candidate real-valued extensions of $h(W)$}

We consider three possible real-valued extensions of $h(W)$ to Wigner-negative states which do satisfy properties $1$ and $2$ (\textit{i.e.}, symmetric functionals that reduce to the usual Wigner entropy for Wigner-positive states).
We first define the \textit{real} Wigner entropy as
\begin{equation}
	h_{\mathrm{r}}(W)
	=
	-\int W(x,p)
	\ln\abs{W(x,p)} \, \dd x \, \dd p .
\end{equation}
It is not an entropy \textit{stricto sensu} as it departs from the usual ``$-z\ln z$'' form, but we use this name because it will appear in Sec. \ref{sec:Complex-Wigner} that it  coincides with the real part of the complex-valued Wigner entropy.
We then define the \textit{absolute} Wigner entropy as the entropy of the absolute value of the Wigner function, namely
\begin{equation}
	h_{\mathrm{a}}(W)
	=
	-\int\abs{W(x,p)}
	\ln\abs{W(x,p)}\, \dd x \, \dd p .
\end{equation}
Note that $\abs{W(x,p)}$ is not normalized for Wigner-negative states, so that $h_{\mathrm{a}}(W)$ is also not a true entropy. Finally, we define the \textit{positive} Wigner entropy as the entropy of the positive part of the Wigner function, namely
\begin{equation}
	h_{\mathrm{+}}(W)
	=
	-\int\limits_{W(x,p)\geq 0} W(x,p)
	\ln W(x,p) \, \dd x \,\dd p.
\end{equation}
Note again that $W(x,p)$ is not normalized over the domain $W(x,p)\geq 0$ in the case of Wigner-negative states, so that $h_{\mathrm{+}}(W)$ is not a genuine entropy

\begin{figure}[t!]
\centering
\includegraphics[width=0.95\linewidth]{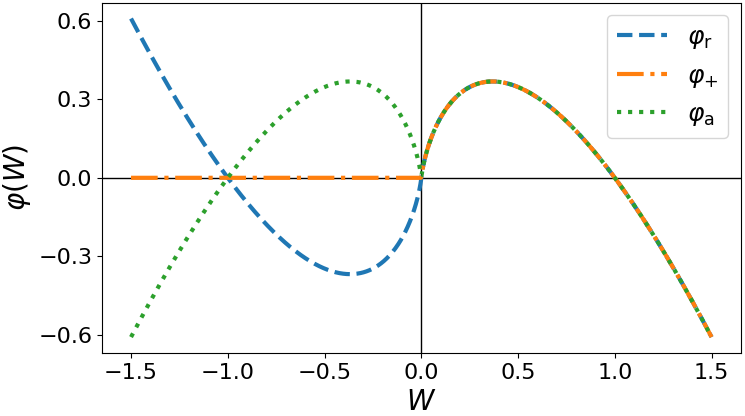}
\caption{
The three plotted functions $\varphi_r(W)=-W\ln\abs{W}$, $\varphi_a(W)=-\abs{W}\ln\abs{W}$, and $\varphi_+(W)=-\Theta(W)\, W\ln W$ are concave in $\mathbb{R}^+$ but not concave in $\mathbb{R}$. Of course, we have $\varphi_r(W)=\varphi_+(W)=\varphi_a(W)=-W\ln W$ provided $W\geq 0$.
These functions are used to build the symmetric functionals $h_{\mathrm{r}}(W)=\int\varphi_r(W(x,p))\, \dd x\, \dd p$, $h_{\mathrm{a}}(W)=\int\varphi_a(W(x,p))\,\dd x\, \dd p$ and $h_+(W)=\int\varphi_+(W(x,p))\,\dd x\,\dd p$, which are concave over the set $\mathcal{W}_+$ but not concave over the set $\mathcal{W}$. Note that the functions $\varphi(W)$ must only be defined on a narrow domain since $|W|\leq1/\pi$ but this does not change the problem.}
\label{Fig:phi_abs_pos_re}
\end{figure}


All three functionals $h_{\mathrm{r}}$, $h_{\mathrm{a}}$, and
$h_+$ are symmetric as they are constructed from the integration of some function $\varphi(W)$ over the entire phase space (see Fig. \ref{Fig:phi_abs_pos_re}), so they are invariant under Gaussian unitaries. Moreover, it is obvious that the three functionals coincide with the Wigner entropy when $W$ is non-negative, that is,  $h_{\mathrm{r}}(W)=h_{\mathrm{a}}(W)=h_+(W)=h(W)$, for all $W\in\mathcal{W}_+$. However, it appears that none of these three functionals is concave over the full set $\mathcal{W}$. This can be understood by looking at the corresponding functions $\varphi_r$, $\varphi_a$, and $\varphi_+$ in Fig. \ref{Fig:phi_abs_pos_re}. 
These three functions coincide with $\varphi(z)=-z \ln z$ when $z\ge 0$, so they are concave over the restricted domain $z\in\mathbb{R}^+$, but they are not concave anymore in the whole domain $z\in\mathbb{R}$. Hence, 
the functionals $h_{\mathrm{r}}$, $h_{\mathrm{a}}$, and $h_+$ are concave for $W\in\mathcal{W}_+$, but not concave for $W\in\mathcal{W}$. These functionals are, of course, not unique -- other functions $\varphi(z)$ could be built -- but it appears that the construction of a symmetric concave functional over $\mathcal{W}$ is bound to fail because it is impossible to build a concave function that extends the function $\varphi(z)=-z \ln z$ to negative $z$ (this is because the derivative of $\varphi(z)$ is infinite at $z=0^+$).

The construction of a real-valued extension of $h(W)$ satisfying properties 1, 2, and 3 thus seems hopeless, which motives the introduction of a complex-valued extension of $h(W)$ in Sec. \ref{sec:Complex-Wigner}. Before doing so, we examine whether some of the above three real-valued extensions of $h(W)$ may nevertheless satisfy properties 4 or 5. Note first that these three functionals can be ordered as
\begin{equation}
	h_{\mathrm{r}}(W)\leq h_+(W)\leq h_{\mathrm{a}}(W).
\label{eq:inequality_hr_ha}
\end{equation}
This can easily be proven by defining the \textit{negative} Wigner entropy as 
\begin{equation}
	h_{\mathrm{-}}(W)
	=	\int\limits_{W(x,p)\leq 0} W(x,p)
	\ln\abs{W(x,p)} \, \dd x \, \dd p  ,
\end{equation}
and noting that 
$h_{\mathrm{-}}(W)\ge 0$ since $\abs{W(x,p)}\leq 1/\pi$, so that $\ln\abs{W(x,p)}< 0$.
It is easily seen that the following relations hold:
\begin{align}
h_{\mathrm{r}}(W)
	=
	h_+(W)-h_-(W),
	\\
h_{\mathrm{a}}(W)
	=
h_+(W)+h_-(W),
\end{align}
which implies Eq. \eqref{eq:inequality_hr_ha} since $h_{\mathrm{-}}(W)\ge 0$. We have investigated whether these three functionals may be a lower bound to the sum of marginal entropies (property~4) or whether they may themselves be lower bounded by $\ln\pi+1$ (property~5). Since the results of this exploration are not fully satisfactory, we report them in Appendix \ref{sec:real-valued}. From numerical evidence, it seems that $h_+(W)$ is lower bounded by $\ln \pi+1$, hence the same is true for $h_{\mathrm{a}}(W)$, but we lack a proof. We can also lower bound the sum of marginal entropies $h(\rho_x)+h(\rho_p)$ by a quantity that depends on $h_{\mathrm{+}}(W)$, but the result is not tight.


\section{Complex-valued entropy of a~Wigner function}
\label{sec:Complex-Wigner}

\subsection{Definition}
The definition \eqref{eq:def:h} based on the differential Shannon entropy is only valid provided $W(x,p)\in\mathcal{W}_+$ because it requires taking the logarithm of $W(x,p)$ and the logarithm function in the real domain is not defined for negative values.
In order to extend the Wigner entropy to all states, including those with negative Wigner functions, we turn to a complex extension of the Wigner entropy by using the complex logarithm function
\begin{equation}
\text{Ln} \,(z)=\ln |z| +i \, \arg(z), \qquad z\in\mathbb{C}.
\label{eq:def_complex_logarithm}
\end{equation}
We must recall here that Ln($z$) is a multivalued function since the argument is defined modulo $2\pi$. Here, we conventionally use the principal value of the logarithm, meaning that its imaginary part lies in $]-\pi,\pi ]$. We will also use a lazy notation and write $\ln (z)$ instead of  $\text{Ln} \,(z)$ for the complex logarithm of $z$.

Thus, we define the complex Wigner entropy as 
\begin{eqnarray}
h_{\mathrm{c}}(W)&=&-\int  \, W(x,p) \, \ln W(x,p) \, \mathrm dx \, \mathrm dp \nonumber\\
&=& h_{\mathrm{r}}(W) + i \, h_{\mathrm{i}}(W)
\label{eq:complex-Wigner-entropy}
\end{eqnarray}
where the real and imaginary parts of $h_{\mathrm{c}}(W)$ are respectively given by
\begin{eqnarray}
h_{\mathrm{r}}(W) &=& -\int W(x,p) \, \ln|W(x,p)| \, \mathrm dx \, \mathrm dp  \\
h_{\mathrm{i}}(W)&=& -\int W(x,p) \, \arg W(x,p) \, \mathrm dx \, \mathrm dp \nonumber \\
&=& \pi\int\frac{|W(x,p)|-W(x,p)}{2}\, \mathrm dx \, \mathrm dp.
\label{realandim}
\end{eqnarray}
Note that $\arg W(x,p)$ is equal to $0$ if the Wigner function is positive and is equal to $\pi$ if it is negative, hence the second definition of $h_{\mathrm{i}}(W)$ [Eq.~(\ref{realandim})]. Thus, by convention, we define $h_{\mathrm{i}}(W)$ as non-negative.

The imaginary part of the complex Wigner entropy happens to be simply proportional to the volume of the negative part of the Wigner function, namely
\begin{eqnarray}
h_{\mathrm{i}}(W) = \pi \, \text{Vol}_{-}(W)
\end{eqnarray}
with
\begin{equation}
	\Vol_{-}(W)
	=-\int\limits_{W(x,p) < 0} W(x,p) \, \mathrm dx \, \mathrm dp.
\end{equation}
The negativity of the Wigner function has long been known to witness the non-classicality of a state and, along this line, the negative volume $\Vol_{-}(W)$ has even been used to build a measure of non-classicality \cite{Anatole_Kenfack_2004} (see also \cite{PhysRevA.98.052350,WitnWigNeg} for a resource-theoretical approach to Wigner negativity). It is remarkable that the negative volume appears naturally here simply from the expression of the complex Wigner entropy.
Note that $h_{\mathrm{i}}(W)$ is unbounded as it is known that one can build Wigner functions with an arbitrarily large negative volume \cite{10.1063/1.3486068}.

\subsection{Properties of the complex Wigner entropy}

Let us examine some properties of the real and imaginary parts of the complex Wigner entropy $h_{\mathrm{c}}(W)$. Note first that $h_{\mathrm{c}}(W)$ is defined only if the integral in Eq.~\eqref{eq:complex-Wigner-entropy} exists.
Since any Wigner function tends to zero at the limit of infinite distance from the origin in phase space, we have the complex logarithm of $W(x,p)\to 0$. However, using the standard continuity argument, we see that this does not cause a problem since $\lim_{W\to 0}W \ln |W| =0$ and $\lim_{W\to 0}W \arg W = 0$.

\begin{prop}
	Both the real and imaginary parts of the complex Wigner entropy are invariant under symplectic transformations  (Gaussian unitaries).
  \label{invariance-real-entropy}
\end{prop}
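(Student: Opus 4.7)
The plan is to mirror the change-of-variables argument already used for the ordinary Wigner entropy $h(W)$ on $\mathcal{W}_+$, exploiting the fact that both integrands in $h_r$ and $h_i$ depend on the point $(x,p)$ only through the value of $W(x,p)$ --- via $W\ln|W|$ and $|W|-W$, respectively. No explicit positional weight is present, so any phase-space transformation that preserves the Lebesgue measure automatically leaves both functionals invariant.

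Concretely, I would proceed as follows. First, invoke the standard covariance of the Wigner function under a Gaussian unitary $\hat U$: writing the induced symplectic-affine action on phase space as $(x,p)\mapsto \mathcal{S}(x,p)+(a,b)$ with $\mathcal{S}\in\mathrm{Sp}(2,\mathbb{R})$ and $(a,b)\in\mathbb{R}^2$, the transformed state $\hat U\hat\rho\hat U^\dagger$ has Wigner function $W'(x,p)=W\bigl(\mathcal{S}^{-1}[(x,p)-(a,b)]\bigr)$. Then, in the integrals defining $h_r(W')$ and $h_i(W')$, perform the change of variables $(x',p')=\mathcal{S}^{-1}[(x,p)-(a,b)]$. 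The Jacobian contributes $|\det\mathcal{S}^{-1}|=1$ because symplectic matrices have unit determinant, the domain of integration remains all of $\mathbb{R}^2$, and the integrand becomes exactly that of $h_r(W)$ and $h_i(W)$ respectively. This is the same logic that yielded Eq.~\eqref{eq:scaling} specialized to $|\det\mathcal{S}|=1$.

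For the imaginary part an even more direct route is available: using the identity $h_i(W)=\pi\,\Vol_{-}(W)$ established just above, it suffices to show that the negative-volume functional is invariant. But the symplectic-affine change of variables above maps the set $\{W'<0\}$ bijectively onto $\{W<0\}$ with unit Jacobian, so $\Vol_{-}(W')=\Vol_{-}(W)$ is immediate, and hence so is the invariance of $h_i$.

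I do not anticipate a genuine obstacle. The only care needed is bookkeeping between the active action on states and the corresponding action on phase-space coordinates --- hence the inverse in the argument of $W'$ --- together with the observation that any Gaussian unitary (displacement, rotation, or squeezer) decomposes into a symplectic factor of unit determinant and a pure displacement of unit Jacobian, so the same one-line change of variables covers the entire class.
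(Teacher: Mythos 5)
Your proposal is correct and follows essentially the same route as the paper: the standard symplectic-affine change of variables with unit Jacobian applied to the integrands of $h_r$ and $h_i$ (the paper keeps the factor $|\det\mathcal{S}|$ explicit before setting it to one, which also makes visible that the normalization of $W$ is used in the $h_r$ step, whereas you absorb this from the outset). Your alternative one-line argument for $h_i$ via $\Vol_-(W)$ is a harmless repackaging of the same computation.
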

\begin{proof}
	Consider the symplectic transformation such that the vector of canonically conjugate variables $\mathbf{X}=(x,p)^T$ transforms into $\mathbf{X'}=(x',p')^T$ according to $\mathbf{X'}=\mathcal{S}\mathbf{X}+\mathbf{d}$, with $\mathcal{S}$ being a symplectic\footnote{A symplectic matrix $\mathcal{S}$ is a real matrix that preserves the so-called symplectic form $\Omega=\begin{psmallmatrix}
	    0 & 1 \\
	     -1 & 0
	\end{psmallmatrix}$, that is, $\mathcal{S}\Omega\mathcal{S}^T=\Omega$. Physically, this corresponds to rotations and squeezing in phase space.} matrix and $\mathbf{d}$ being a (real) displacement vector. Following this transformation, the Wigner function can thus be expressed as $W'(x',p')= |\det(\mathcal{S})|^{-1}W(x,p)$, where $(x,p)$ is expressed in terms of $(x',p')$ by inverting the transformation. Therefore, using the change of variable $\dd x'\, \dd p' = |\det(\mathcal{S})|\, \dd x\, \dd p$,
	\begin{eqnarray}
	h_{\mathrm{r}}(W')&=&-\int W'(x',p')\ln|W'(x',p')|\, \dd x'\, \dd p'\nonumber\\
	&=&-\int W(x,p)\ln \left|\frac{W(x,p)}{\det( \mathcal{S})}\right| \dd x \, \dd p\nonumber\\
	&=&h_{\mathrm{r}}(W)+ \ln |\det(\mathcal{S})| \nonumber\\
	&=& h_{\mathrm{r}}(W)
 \label{eq:proof-invariance-real-entropy}
	\end{eqnarray}
	and
	\begin{eqnarray}
	h_{\mathrm{i}}(W')&=&\pi\int\frac{|W'(x',p')|-W'(x',p')}{2} \, \dd x'\, \dd p'\nonumber\\
	&=&\pi\int\frac{|W(x,p)|-W(x,p)}{2 |\det( \mathcal{S})|}\, \dd x'\, \dd p'\nonumber\\
 &=&\pi\int\frac{|W(x,p)|-W(x,p)}{2 }\, \dd x\, \dd p\nonumber\\
	&=&h_{\mathrm{i}}(W).
 \label{eq:proof-invariance-imaginary-entropy}
	\end{eqnarray}
 \end{proof}
 
 Note that the proof of Eq.~\eqref{eq:proof-invariance-real-entropy} looks similar to the corresponding proof for Wigner-positive states in Sec. \ref{sec:Wigner-entropy-positive-states} as we 
 have used $\det(\mathcal{S})=1$
 since $\mathcal{S}$ is symplectic. Yet, the proof of Eq.~\eqref{eq:proof-invariance-real-entropy} exploits the normalization of $W(x,p)$ in $\cal W$ (encompassing Wigner-negative states).

Together, Eqs. 
 \eqref{eq:proof-invariance-real-entropy} and  \eqref{eq:proof-invariance-imaginary-entropy} thus imply that the complex Wigner entropy $h_{\mathrm{c}}$ is indeed invariant under all Gaussian unitaries. Note that this property remains valid when extending the Wigner function to $n$ modes.

\begin{prop}
Both the real and imaginary parts of the complex Wigner entropy are symmetric functionals.
\end{prop}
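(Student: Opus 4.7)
The plan is to verify the statement directly by exhibiting the single-variable functions $\varphi_r$ and $\varphi_i$ whose pointwise application to $W(x,p)$ and subsequent integration over phase space reproduces $h_r(W)$ and $h_i(W)$. Recall that in the excerpt a functional on $\mathcal{W}$ is called \emph{symmetric} precisely when it admits a representation of the form $\int \varphi(W(x,p)) \, \dd x \, \dd p$ for some function $\varphi$ of a single real argument; this is exactly the continuous analog of invariance of the Shannon entropy under permutations of the probability vector, and implies invariance under every area-preserving transformation $\mathcal{M}$ of phase space.

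First I would read off from the defining formula $h_r(W) = -\int W(x,p)\ln|W(x,p)| \, \dd x \, \dd p$ that the natural choice is
\begin{equation}
\varphi_r(z) = -z \ln|z|, \qquad z \in \mathbb{R},
\end{equation}
with the usual continuous extension $\varphi_r(0)=0$ (justified by $\lim_{z\to 0} z\ln|z|=0$, which the paper already invokes to guarantee integrability). Then $h_r(W) = \int \varphi_r(W(x,p)) \, \dd x \, \dd p$ by inspection, and the symmetry property follows immediately from the definition.

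Next I would do the same for the imaginary part. Using the second form of $h_i(W)$ given in the excerpt, namely $h_i(W) = \pi\int \tfrac{|W(x,p)|-W(x,p)}{2} \, \dd x \, \dd p$, one sees that the integrand depends pointwise only on $W(x,p)$, so the choice
\begin{equation}
\varphi_i(z) = \frac{\pi}{2}\bigl(|z|-z\bigr) = \pi \, z \, \mathbf{1}_{z<0}(z),
\end{equation}
yields $h_i(W) = \int \varphi_i(W(x,p)) \, \dd x \, \dd p$, which is again of the required symmetric form. At this point the proof is complete: both $h_r$ and $h_i$ are integrals of a pointwise function of $W$, hence both are symmetric functionals, and therefore both remain invariant under every area-preserving transformation of phase space — a statement strictly stronger than the invariance under Gaussian unitaries established in the preceding proposition.

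There is no genuine obstacle here; the entire content of the statement reduces to recognizing that both defining integrals already have the pointwise form demanded by the definition of symmetry. The only subtlety worth flagging is the behavior of $\varphi_r$ at $z=0$, which is handled by the standard continuity extension, and the fact that $\varphi_i$ is continuous but non-smooth at the origin — neither of which affects the argument, since the notion of symmetric functional requires only the existence of a representation as $\int \varphi(W) \, \dd x \, \dd p$, with no regularity hypothesis on $\varphi$.
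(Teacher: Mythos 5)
Your proof is correct and follows essentially the same route as the paper: both exhibit $h_r$ and $h_i$ as integrals of pointwise functions $\varphi_r$, $\varphi_i$ of $W$, which is precisely the definition of a symmetric functional. The only (cosmetic) difference is that your $\varphi_r(z)=-z\ln|z|$ matches the paper's own definition of $h_r$ and its Fig.~2 caption, whereas the paper's proof of this proposition writes $\varphi_r(z)=-|z|\ln z$, apparently a typo.
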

\begin{proof}
The proof simply goes by noting that the real and imaginary part of $h_{\mathrm{c}}$ can be written as \begin{eqnarray}
h_{\mathrm{r}}(W)&=&\int \varphi_r\left(W(x,p)\right) \, \dd x \, \dd p , 
\label{eq:h_r-phi_r}\\
h_{\mathrm{i}}(W)&=&\int \varphi_i\left(W(x,p)\right) \, \dd x \, \dd p ,
\label{eq:h_i-phi_i}
\end{eqnarray}
with 
$\varphi_r(z)=-|z|\ln z$ and $\varphi_i(z)=\pi(|z|-z)/2$. 
The fact that $W$ can be negative does not make any difference. 
\end{proof} 

It can be shown that integrals such as those appearing in Eqs. \eqref{eq:h_r-phi_r} and \eqref{eq:h_i-phi_i}
only depend on $W(x,p)$ through the level function, hence they are equal for two Wigner functions that have the same level function \cite{VanHerstraeten2023continuous}. This property extends the invariance of $h_{\mathrm{c}}$ under symplectic transformations (which are area-preserving) to the invariance of $h_{\mathrm{c}}$ under all (even nonphysical) area-preserving transformations (those that keep the level function of the Wigner function unchanged, so that the initial and final Wigner functions are equivalent in the sense of continuous majorization theory, see \cite{VanHerstraeten2023continuous}).

\begin{prop}
	The real part of the complex Wigner entropy is additive when considering a product state.
\end{prop}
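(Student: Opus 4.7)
The plan is to exploit the factorization of the Wigner function of a product state together with the multiplicativity of the absolute value. For a product state $\hat\rho_{12}=\hat\rho_1\otimes\hat\rho_2$ on two modes, the Wigner function factorizes as $W_{12}(x_1,p_1,x_2,p_2)=W_1(x_1,p_1)\,W_2(x_2,p_2)$. Since $|W_1 W_2|=|W_1|\,|W_2|$ (this holds for arbitrary real numbers, regardless of sign), the complex logarithm used in the definition of $h_r$ enjoys the familiar additive identity $\ln|W_1 W_2|=\ln|W_1|+\ln|W_2|$ everywhere the integrand is nonzero.

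Starting from $h_r(W_{12})=-\int W_1 W_2 \,\ln|W_1 W_2|\,\dd x_1\dd p_1\dd x_2\dd p_2$, I would split the logarithm into the two pieces above and separate the integral into a product of one-mode integrals. Each of the two resulting double integrals factorizes as an entropy-like integral over one mode multiplied by the integral of the other Wigner function over its phase space. Using the normalization $\int W_i(x_i,p_i)\,\dd x_i\,\dd p_i=1$ (valid for \emph{any} physical state, including Wigner-negative ones), the two terms collapse exactly to $h_r(W_1)+h_r(W_2)$.

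The only genuine subtlety, and the reason the proposition is stated for the real part only, is that the same strategy fails for $h_i$: the principal-value argument is not additive under multiplication of signed real numbers (for instance, the product of two negative values is positive, so $\arg(W_1 W_2)$ can differ from $\arg W_1+\arg W_2$ by a multiple of $2\pi$). I would point this out as a short remark after the calculation, noting that the additive identity for $\ln|\cdot|$ is the structural ingredient that makes $h_r$ additive while $h_i$ is not. Extension to $n$ modes follows by the same argument applied inductively, and the proof never requires the Wigner functions to be non-negative, only that they be normalized and real-valued.
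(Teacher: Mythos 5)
Your proof is correct and follows essentially the same route as the paper: factorize $W=W_1W_2$, use $\ln|W_1W_2|=\ln|W_1|+\ln|W_2|$, split the integral, and invoke the normalization $\int W_i\,\dd x_i\,\dd p_i=1$, which holds even for Wigner-negative states. Your closing remark correctly identifies why the argument breaks for $h_i$ (non-additivity of the principal-value argument), consistent with the paper's separate finding that $h_i$ is superadditive for product states.
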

\begin{proof}Let  us consider a two-mode Wigner function defined as $W(x_1,p_1,x_2,p_2)=W_1(x_1,p_1)W_2(x_2,p_2)$. Then,
	\begin{eqnarray}
	  h_{\mathrm{r}}(W)
	&=&-\int W\ln |W|\,\dd x_1 \,\dd p_1\, \dd x_2\, \dd p_2  \nonumber\\ 
	&=&-\int W_1\ln|W_1| \,\dd x_1 \,\dd p_1\int W_2\, \dd x_2\, \dd p_2 \nonumber\\
	&&~-\int W_1\,\dd x_1 \,\dd p_1 \int W_2\ln|W_2|\, \dd x_2\, \dd p_2\nonumber\\
	&=&h_{\mathrm{r}}(W_1) + h_{\mathrm{r}}(W_2)
	\end{eqnarray}
	where we have omitted the arguments of Wigner functions for notational simplicity. 
\end{proof}

Again, the proof is similar to that for Wigner-positive states but here we have used the normalization of Wigner functions in $\cal W$, including negative Wigner functions.
Interestingly, numerics seems to indicate that the real part of the complex Wigner entropy is subadditive for a two-mode Wigner function $W(x_1,p_1,x_2,p_2)$ that is not in a product form, namely $h_{\mathrm{r}}(W)\leq h_{\mathrm{r}}(W_1)+h(W_2)$, but we have not found a proof.

\begin{prop}
	The imaginary part of the complex Wigner entropy is superadditive when considering a product state.
\end{prop}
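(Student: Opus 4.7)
The plan is to exploit the factorization of $|W|$ on a product Wigner function together with the normalization of each factor. First, I would write the imaginary part in its volume form
\begin{equation}
h_i(W) = \pi \int \frac{|W(x,p)| - W(x,p)}{2} \, \mathrm dx\, \mathrm dp ,
\end{equation}
which is the version of $h_i$ most suited to manipulations involving $|W|$. For a product state $W(x_1,p_1,x_2,p_2) = W_1(x_1,p_1)\, W_2(x_2,p_2)$, the modulus factors as $|W| = |W_1|\,|W_2|$, so the four-dimensional integral of $|W|$ separates as $\bigl(\int |W_1|\bigr)\bigl(\int |W_2|\bigr)$ and similarly $\int W = \bigl(\int W_1\bigr)\bigl(\int W_2\bigr) = 1$ by normalization.

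Next, I would introduce $v_j \coloneqq \mathrm{Vol}_-(W_j) = h_i(W_j)/\pi$ for $j=1,2$, so that $\int |W_j|\, \mathrm dx_j\, \mathrm dp_j = 1 + 2 v_j$. A direct computation then gives
\begin{align}
\frac{h_i(W)}{\pi}
&= \frac{1}{2}\left[(1+2v_1)(1+2v_2) - 1\right] \nonumber\\
&= v_1 + v_2 + 2 v_1 v_2,
\end{align}
which rearranges into
\begin{equation}
h_i(W) = h_i(W_1) + h_i(W_2) + 2\pi\, \mathrm{Vol}_-(W_1)\,\mathrm{Vol}_-(W_2).
\end{equation}
Since $\mathrm{Vol}_-(W_j)\geq 0$ by definition, the cross term is non-negative, which establishes the superadditivity inequality $h_i(W) \geq h_i(W_1) + h_i(W_2)$, with equality if and only if at least one of the two marginal Wigner functions is non-negative.

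There is no real obstacle here: the proof is essentially a bookkeeping exercise once one writes $h_i$ via $\mathrm{Vol}_-$ and uses $|W_1 W_2| = |W_1||W_2|$ together with normalization. The only thing to be careful about is to keep track of the factor of $1/2$ and the contribution of normalization so that the cross term $2\pi v_1 v_2$ emerges cleanly; this also makes the physical content transparent, namely that combining two Wigner-negative states multiplicatively produces additional negative volume from the product of their respective negative regions with the positive regions of the other factor.
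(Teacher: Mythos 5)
Your proof is correct and is essentially the paper's argument in integrated form: the paper expands the pointwise identity $|W_1W_2|-W_1W_2=(|W_1|-W_1)W_2+W_1(|W_2|-W_2)+(|W_1|-W_1)(|W_2|-W_2)$ and integrates term by term, whereas you multiply the integrated quantities $\int|W_j|=1+2\,\mathrm{Vol}_-(W_j)$ directly; both yield the same cross term $2\pi\,\mathrm{Vol}_-(W_1)\,\mathrm{Vol}_-(W_2)=\tfrac{2}{\pi}h_i(W_1)h_i(W_2)\ge 0$. Indeed, the paper's own footnote restates its result exactly in your volume form, $\mathrm{Vol}_-(W)=\mathrm{Vol}_-(W_1)+\mathrm{Vol}_-(W_2)+2\,\mathrm{Vol}_-(W_1)\,\mathrm{Vol}_-(W_2)$, so there is no substantive difference.
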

\begin{proof}Using the identity 
\begin{eqnarray}
\lefteqn{ |XY|-XY = (|X|-X)Y+X(|Y|-Y) }\hspace{2cm} \nonumber \\
&+&(|X|-X)(|Y|-Y) 
\end{eqnarray}
and setting $W(x_1,p_1,x_2,p_2)=W_1(x_1,p_1)W_2(x_2,p_2)$, we obtain
	\begin{eqnarray}
	\lefteqn{h_{\mathrm{i}}(W) = \pi\int \frac{|W|-W}{2} \,\dd x_1 \,\dd p_1\, \dd x_2\, \dd p_2} \nonumber\\
	&=&\pi\int \frac{|W_1 W_2|-W_1 W_2}{2}\,\dd x_1 \,\dd p_1\, \dd x_2\, \dd p_2  \nonumber\\
 &=&\pi\int \frac{(|W_1|-W_1)}{2}\,\dd x_1 \,\dd p_1 \int W_2\, \dd x_2\, \dd p_2\nonumber\\
 &&~+\pi\int W_1\, \dd x_1\, \dd p_1 \int \frac{(|W_2|-W_2)}{2}\,\dd x_2 \,\dd p_2 + \Delta\nonumber\\
 &=&h_{\mathrm{i}}(W_1)+h_{\mathrm{i}}(W_2) + \Delta 
\label{eq:superadditivity}
\end{eqnarray}
where we have defined
\begin{eqnarray}
\Delta &=& \pi\int\frac{(|W_1|-W_1)(|W_2|-W_2)}{2}\,\dd x_1 \,\dd p_1\, \dd x_2\, \dd p_2
\nonumber \\
&=& \frac{2}{\pi}\, h_{\mathrm{i}}(W_1) \,h_{\mathrm{i}}(W_2)\nonumber\\  &\geq& 0 \end{eqnarray} 
\end{proof}
Thus, $h_{\mathrm{i}}(W)\geq h_{\mathrm{i}}(W_1)+h_{\mathrm{i}}(W_2)$. We see that $\Im(h_{\mathrm{c}})$ becomes additive in the special case where at least one of the two states is Wigner-positive, but otherwise it is superadditive because the product of the negative volumes is nonzero\footnote{Eq.~\eqref{eq:superadditivity} can be understood by re-expressing it in terms of Wigner negative volumes, namely $\text{Vol}_{-}(W)=\text{Vol}_{-}(W_1)+\text{Vol}_{-}(W_2)+2\, \text{Vol}_{-}(W_1)\, \text{Vol}_{-}(W_2)$. Indeed, by using normalization, \textit{i.e.}, $\text{Vol}_{+}(\cdot)-\text{Vol}_{-}(\cdot)=1$, Eq.~\eqref{eq:superadditivity} becomes equivalent to $\text{Vol}_{-}(W)=\text{Vol}_{+}(W_1)\,\text{Vol}_{-}(W_2)+ \text{Vol}_{-}(W_1)\,\text{Vol}_{+}(W_2)$. Clearly, $W<0$ if either $W_1<0$ or $W_2<0$ but not both. Note that superadditivity is consistent with the interpretation of $\text{Vol}_{-}(W)$ as a resource monotone.}. Note that the superadditivity of $\Im(h_{\mathrm{c}})$ is linked to our convention  $\arg W(x,p)=\pi$ for $W(x,p)<0$, but it should be realized that $\Im(h_{\mathrm{c}})$  becomes subadditive if we choose $\arg W(x,p)=-\pi$. From numerics, we also observe that the imaginary part of the complex entropy~$h_{\mathrm{i}}$ apparently remains superadditive for arbitrary two-mode Wigner functions $W(x_1,p_1,x_2,p_2)$, but we have not proven this.

\begin{prop}
	The imaginary part of the complex Wigner entropy is  a convex function of the Wigner function\footnote{This property can also be immediately deduced from the convexity of the negative volume, see Appendix C of \cite{PhysRevA.98.052350}. 
 }. 
\end{prop}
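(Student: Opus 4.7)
The plan is to exploit the fact that $h_i$ is the integral of a pointwise function of $W$, so convexity of the functional follows from convexity of the integrand. Concretely, from Eq.~\eqref{eq:h_i-phi_i} we have
\begin{equation}
h_i(W) = \int \varphi_i\!\left(W(x,p)\right) \, \dd x \, \dd p, \qquad \varphi_i(z)=\frac{\pi}{2}\left(|z|-z\right),
\end{equation}
so it is enough to show that the scalar function $\varphi_i:\mathbb{R}\to\mathbb{R}$ is convex.

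First I would observe that $\varphi_i(z)=\pi\max(0,-z)$, i.e.\ the rectifier evaluated at $-z$. This is manifestly convex on $\mathbb{R}$: the map $z\mapsto|z|$ is convex (triangle inequality), the map $z\mapsto -z$ is affine, and a nonnegative linear combination of convex functions remains convex, so $\varphi_i$ is convex. Equivalently, $\varphi_i$ is piecewise linear with slope $-\pi$ on $(-\infty,0]$ and slope $0$ on $[0,\infty)$, hence convex.

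Then, for any two Wigner functions $W_1,W_2\in\mathcal{W}$ and any $\lambda_1,\lambda_2\geq 0$ with $\lambda_1+\lambda_2=1$, pointwise convexity of $\varphi_i$ gives
\begin{equation}
\varphi_i\!\left(\lambda_1 W_1(x,p)+\lambda_2 W_2(x,p)\right) \leq \lambda_1\, \varphi_i\!\left(W_1(x,p)\right)+\lambda_2\, \varphi_i\!\left(W_2(x,p)\right)
\end{equation}
for all $(x,p)$. Integrating both sides over phase space (and using linearity of the integral) yields
\begin{equation}
h_i(\lambda_1 W_1+\lambda_2 W_2)\leq \lambda_1\, h_i(W_1)+\lambda_2\, h_i(W_2),
\end{equation}
which is exactly the desired convexity of $h_i$ as a functional of the Wigner function.

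There is no real obstacle here: the whole argument reduces to the elementary convexity of the one-variable function $\varphi_i(z)=\pi(|z|-z)/2$, which is convex because it is the composition of the convex absolute-value function with affine maps. Equivalently, as indicated in the footnote, one can simply recall that $h_i(W)=\pi\,\Vol_{-}(W)$ and invoke the known convexity of the negative-volume functional.
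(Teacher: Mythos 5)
Your proof is correct and is essentially the same as the paper's: the paper applies the triangle inequality $|\lambda W_1+(1-\lambda)W_2|\leq\lambda|W_1|+(1-\lambda)|W_2|$ directly inside the integral defining $h_i$, which is exactly your pointwise convexity of $\varphi_i(z)=\pi(|z|-z)/2$ followed by integration. The only difference is presentational -- you package the key step as convexity of the scalar integrand rather than invoking the triangle inequality inline.
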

\begin{proof}Let  us consider the following mixture $W(x,p)=\lambda W_1(x,p)+(1-\lambda)W_2(x,p)$, where $\lambda \in [0,1]$. Then,
	\begin{eqnarray}
	h_{\mathrm{i}}(W)
	&=&\pi\int \left(\frac{|\lambda W_1+(1-\lambda)W_2|}{2}\right.\nonumber\\
	&& \left. ~~~~~~~~ -\frac{\lambda W_1+(1-\lambda)W_2}{2}\right) \, \dd x \, \dd p\nonumber\\
	&\leq&\pi\int \left(\frac{\lambda| W_1|+(1-\lambda)|W_2|}{2}\right.\nonumber\\
	&&\left. ~~~~~~~ -\frac{\lambda W_1+(1-\lambda)W_2}{2}\right) \, \dd x \, \dd p\nonumber\\
	&=&\lambda \, h_{\mathrm{i}}(W_1)+(1-\lambda) \, h_{\mathrm{i}}(W_2)
	\end{eqnarray}
	where we have used the triangle inequality, namely $|X+Y|\le |X|+|Y|$.
\end{proof}

Again, convexity  originates here from the choice that  $\arg W(x,p)=\pi$ for $W(x,p)<0$. The opposite choice $\arg W(x,p)=-\pi$ would instead imply the concavity of $\Im(h_{\mathrm{c}})$. Unfortunately, we cannot say anything about the concavity of the real part of the Wigner entropy. Numerical tests show that it is neither concave nor convex. 

\subsection{Expressing the Wigner entropy in state space}

By using the overlap formula \eqref{eq:overlapformula}, the definition \eqref{eq:complex-Wigner-entropy} of the complex Wigner entropy in phase space may be reexpressed in state space as
\begin{eqnarray}
h_{\mathrm{c}}(W)
= \tr{ \hat \rho \, \hat\Theta } ,
\end{eqnarray}
where $\hat\Theta$ is defined as the operator whose Weyl transform is given by
\begin{eqnarray}
\mathcal W[\hat\Theta] = - \frac{1}{2\pi}\ln W(x,p) 
\end{eqnarray}
Thus, we may also rewrite 
$\hat\Theta=\mathcal S[\hat L] $,
where $\mathcal S[\hat L ]$ denotes the Weyl (or symmetric) ordered form of the operator $\hat L =  - \ln W(\hat x,\hat p)$, hence 
\begin{eqnarray}
h_{\mathrm{c}}(W) = \tr{ \hat \rho \,  \mathcal S[\hat L] } .
\label{eq:Wigner-entropy-state-space}
\end{eqnarray}

Unfortunately, Eq. \eqref{eq:Wigner-entropy-state-space} does not seem to provide a very convenient method to compute $h_{\mathrm{c}}$, except in some special cases. As an example, consider a thermal state with a mean photon number $\nu$. Its Wigner function is
\begin{equation}
W_\nu(x,p) = \frac{1}{\pi(1+2\nu)} \exp(-\frac{x^2+p^2}{1+2\nu}) .
\end{equation}
Here, the operator
\begin{equation}
\hat L = -\ln W_\nu(\hat x, \hat p) = \ln \pi(1+2\nu) + \frac{\hat x^2+\hat p^2}{1+2\nu}
\end{equation}
is already written in symmetric order, so we have
\begin{equation}
\mathcal S[\hat L] = \ln \pi(1+2\nu) + \frac{\hat \openone + 2 \hat n}{1+2\nu} .
\end{equation}
Since $\tr{\hat \rho \, \hat n}=\nu$, Eq.~\eqref{eq:Wigner-entropy-state-space} gives
\begin{eqnarray}
h_{\mathrm{c}}(W_\nu)
&=& \ln \pi +1 + \ln (1+2\nu),
\end{eqnarray}
which is indeed the Wigner entropy of a Gaussian thermal state.

\subsection{Examples and numerical exploration}

We have numerically computed the complex Wigner entropy of a variety of quantum states in order to get a grasp on its meaning. First of all, we know that $h_{\mathrm{c}}$ lies on the real axis as soon as the state is Wigner positive. Furthermore, provided that conjecture \eqref{eq:conjecture} is true, we know that $\Re h_{\mathrm{c}} \geq \ln \pi +1$ for all Wigner-positive states. This is visible in Fig. \ref{fig:random-points}, where we have plotted the complex Wigner entropy $h_{\mathrm{c}}$ of randomly generated states\footnote{
The generation of random states works as follows. Draw an eigenvalue vector $\bm{\lambda}$ from  a random $N\times N$ unitary $\mathbf{U}$, so that $\lambda_n=\abs{U_{0n}}^2$ [$\bm{\lambda}=(1,0,...)$ for pure states], then apply another random $N\times N$ unitary $\mathbf{V}$ so that $\rho=\mathbf{V}\mathrm{diag}(\bm{\lambda})\mathbf{V}^\dagger$ (here $\mathbf{U},\mathbf{V}$ are drawn according to the Haar measure).
This defines a state $\hat{\rho}=\sum_{mn}\rho_{mn}\ket{m}\bra{n}$.
More details about this procedure are given in Appendix C of Ref. \cite{VanHerstraeten2023continuous}.
Random states of Fig. \ref{fig:random-points} are generated for $N=2,...,14$.
}, in particular pure states (orange) and mixed states (blue). The red dotted line indicates $h_{\mathrm{r}}=\ln\pi+1$. We note that mixed states typically have a lower value of $\Im h_{\mathrm{c}} $, which is of course associated with a lower (or even zero) negative volume. The first Fock states $\ket{n}$ with $n$ ranging from 0 to 8 are represented by red stars in the complex entropy plane; we observe that both $\Re h_{\mathrm{c}}$ and $\Im h_{\mathrm{c}}$ increase with $n$. The vacuum state (as well as any pure Gaussian state) corresponds to the point $(\ln \pi +1,0)$.

\begin{figure}[t!]
\centering
\includegraphics[width=\linewidth]{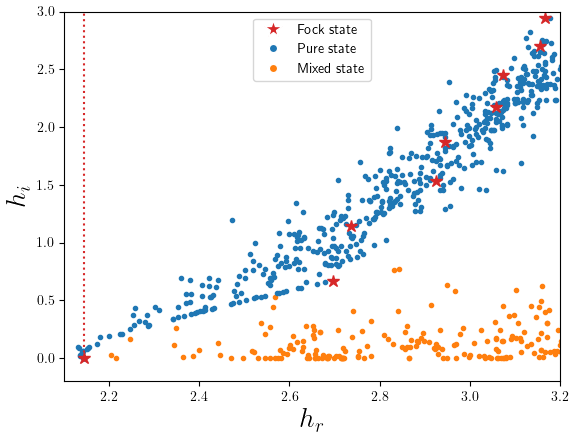}
\caption{Complex Wigner entropy $h_{\mathrm{c}} = h_{\mathrm{r}} + i \, h_{\mathrm{i}}$ of randomly generated states. Each blue (orange) point is associated with a random pure (mixed) state. Red stars represent Fock states $\ket{n}$ with $n=0,1,..., 8$. Note that $h_{\mathrm{r}}$ and $h_{\mathrm{i}}$ increase monotonically with $n$ for Fock states.
The dotted red line corresponds to $h_{\mathrm{r}}=\ln\pi+1\simeq 2.145$, which is a (conjectured) lower bound on the real part of the entropy of Wigner-positive states \cite{PhysRevA.104.042211}.
}
\label{fig:random-points}
\end{figure}

In Fig. \ref{fig:magnified-Fock-states}, we look more precisely at the complex entropy near this point and compare the superpositions and mixtures of the lowest Fock states. The situation is more subtle than anticipated as we see that for some pure state near the vacuum, $\Im h_{\mathrm{c}}$ becomes positive (which logically reflects the non-Gaussianity of the state as a consequence of Hudson's theorem) but $\Re h_{\mathrm{c}}$ goes below $\ln \pi +1$. This conjectured lower bound only applies along the real axis (\textit{i.e.}, for Wigner positive states), so it is not contradicted. Sadly, however, it seems difficult to determine the lowest possible value of $\Re h_{\mathrm{c}}$ for all states. Furthermore, we observe that the path connecting two neighboring Fock states is rather complicated. We see that the superpositions (solid blue curves) generally have much larger values of $\Im(h_{\mathrm{c}})$ than the corresponding mixtures (dashed orange curves), which follows from the comparison of their negative volumes. The value of $\Im(h_{\mathrm{c}})$ reaches precisely zero if the mixed state becomes Wigner positive (which happens if the weight of $\ket{0}$ is large enough). For example, we have $\Im(h_{\mathrm{c}})=0$ for balanced mixtures of $\ket{0}$ and $\ket{1}$, or balanced mixtures of $\ket{0}$ and $\ket{2}$, but never for any mixture of $\ket{1}$ and $\ket{2}$. When increasing the weight of $\ket{0}$ beyond this point, $h_{\mathrm{c}}$ follows the real axis down to the lower bound attained for $\ket{0}$.

\begin{figure}
    \centering
    \includegraphics[width=\linewidth]{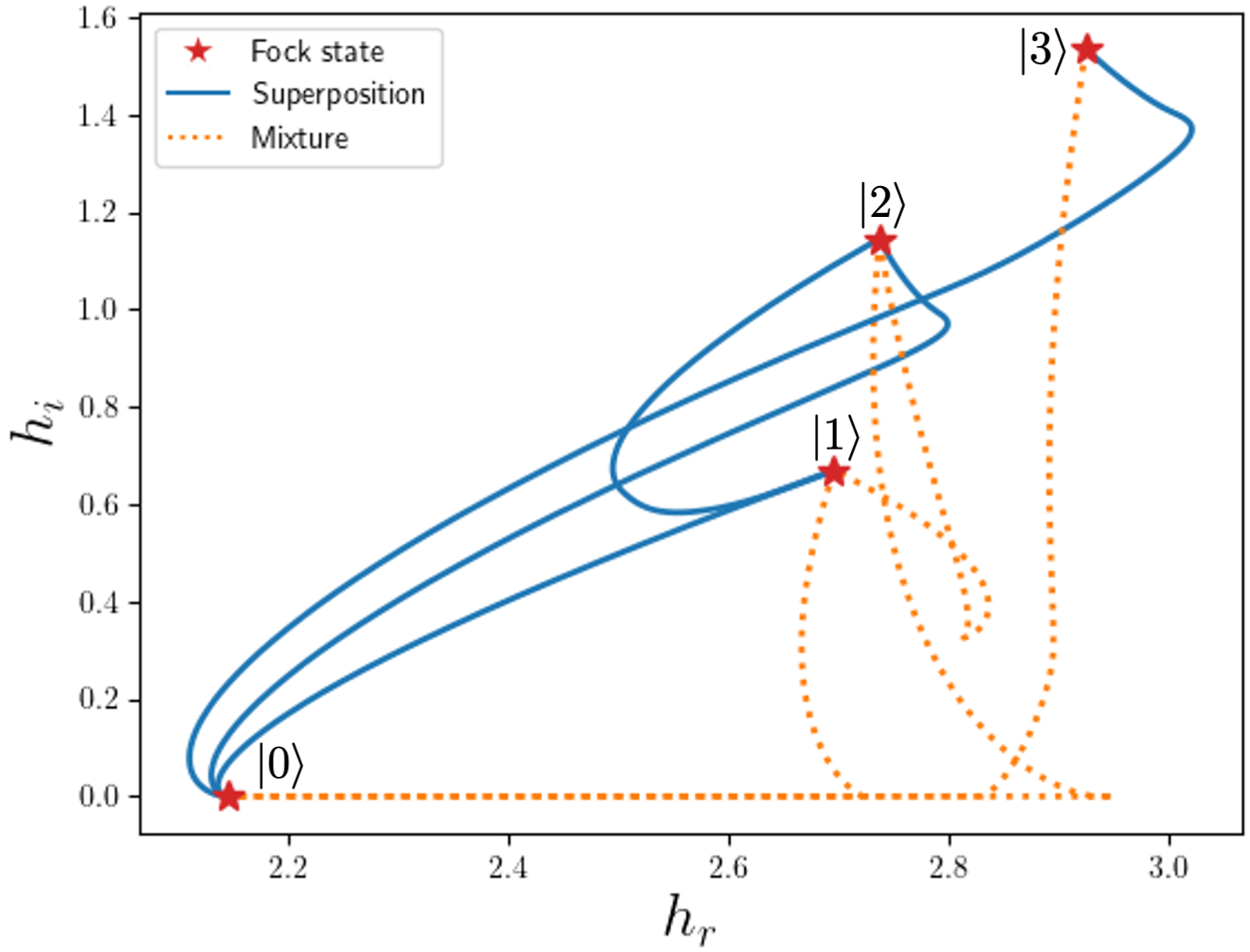}
    \caption{
    Plots of the complex Wigner entropy $h_{\mathrm{c}} = h_{\mathrm{r}} + i \, h_{\mathrm{i}}$ for superpositions vs. mixtures of Fock states (red stars stand for Fock states).
    The blue lines correspond to binary superpositions of Fock states, \textit{i.e.}, $\ket{\psi}=\sqrt{p}\ket{m}+e^{i\varphi}\sqrt{1-p}\ket{n}$ with $p\in[0,1]$ (note that the complex Wigner entropy of $\ket{\psi}$ is independent of $\varphi$). The dotted orange lines are the corresponding binary mixtures of Fock states, \textit{i.e.}, $\hat{\rho}=p\ket{m}\bra{m}+(1-p)\ket{n}\bra{n}$.
    The following binary superpositions and mixtures are plotted: $(m,n)=(0,1)$, $(0,2)$, $(0,3)$, and $(1,2)$.
    }
    \label{fig:magnified-Fock-states}
\end{figure}

Let us now perform the analytical calculation of $h_{\mathrm{c}}$ for another example, namely cat states. With the appropriate displacement and rotation (hence, some Gaussian unitary, which do not affect $h_{\mathrm{c}}$), any cat state can be brought to the following canonical expression:
\begin{align}
    \ket{\psi}_\alpha
    =
    \frac{1}{\sqrt{C}}
    \left(
    \sqrt{m}\ket{\alpha}+e^{i\varphi}\sqrt{1-m}\ket{-\alpha}
    \right)
\end{align}
where $\alpha\in\mathbb{R}_+$, $\varphi\in[0,2\pi)$, $m\in(0,1)$, and the normalization constant is  $C=1+2\sqrt{m(1-m)}\exp(-2\alpha^2)\cos\varphi$. The density operator is thus
\begin{align}
    &\ket{\psi}_\alpha\!\bra{\psi}
    =
    \frac{1}{C}\Big(
    m\ket{\alpha}\bra{\alpha}+(1-m)\ket{-\alpha}\bra{-\alpha}
    \nonumber
    \\
    &~~+2\sqrt{m(1-m)} (e^{i\varphi}\ket{-\alpha}\bra{\alpha} + e^{-i\varphi}\ket{\alpha}\bra{-\alpha})
    \Big)
\end{align}
so the corresponding Wigner function is given by
\begin{align}
    W_\alpha(x,p)
    =&
    \frac{m}{C}\,
    W_0(x-\sqrt{2}\alpha,p)
    +
    \frac{1-m}{C}\,
    W_0(x+\sqrt{2}\alpha,p)
    \nonumber
    \\&+
    \frac{2\sqrt{m(1-m)}}{C} \,
    W_0(x,p)
    \cos(2\sqrt{2}\alpha p+\varphi)
\end{align}
where $W_0(x,p)=\exp(-x^2-p^2)/\pi$ is the Wigner function of the vacuum state. In Fig. \ref{fig:cat-states}, we plot the real and imaginary parts of the complex Wigner entropy of this cat state (taking $m=1/2$ and $\varphi=0$) and compare with their values for the corresponding incoherent mixture of the coherent states $\ket{\alpha}$ and
$\ket{-\alpha}$. Of course, $\Im (h_{\mathrm{c}})$ vanishes in the incoherent case, while it is non-zero for cat states (it is shown in Appendix \ref{sec:cat-states-high-alpha} that $\Im h_{\mathrm{c}}$ tends to 1 in the limit $\alpha\to\infty$). In contrast, the curves of $\Re (h_{\mathrm{c}})$ are more comparable for the cat state and its corresponding mixture (as shown in Appendix \ref{sec:cat-states-high-alpha}, the two curves tend to each other in the limit $\alpha\to\infty$, which is understandable as the two Wigner functions almost coincide up to a series of infinitely narrow fringes whose contributions cancel each other out). Of course, the cat state and mixture both coincide with the vacuum state for $\alpha=0$, hence their complex entropies coincide. It can also be verified that the difference between $h_{\mathrm{r}}$ for $\alpha=0$ and $h_{\mathrm{r}}$ for $\alpha\to\infty$ tends to $\ln 2$.

\begin{figure}[t]
    \centering
    \includegraphics[width=0.9\linewidth]{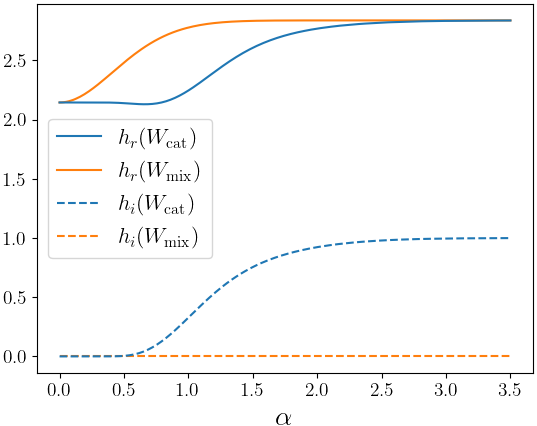}	
    \caption{Plots of the real and imaginary parts of the complex Wigner entropy $h_{\mathrm{r}}+i\, h_{\mathrm{i}}$ for $W_{\mathrm{cat}}$ and $W_{\mathrm{mix}}$, which are respectively the Wigner functions of an even cat state and of a balanced mixture of coherent states $\ket{\alpha}$ and $\ket{-\alpha}$. When $\alpha=0$, the two states coincide with the vacuum state. As $\alpha$ increases, the negative volume of the cat state increases. In the regime of $|\alpha|\gg 1$, $h_{\mathrm{r}}\left(W_{\mathrm{cat}}\right) \simeq h_{\mathrm{r}}(W_{\mathrm{mix}})$ and $h_{\mathrm{i}}\left(W_{\mathrm{cat}}\right)\simeq 1$.}
    \label{fig:cat-states}
\end{figure}


\section{Complex Fisher information and extended de Bruijn's identity}
\label{sec:Complex-Fisher}


\subsection{Classical overview}

Let us consider a vector $\mathbf{X}_0$ of classical random variables and a normally-distributed random vector $\mathbf{N}$ with covariance matrix $\mathbf{I}$.
From these two random vectors, we define the random vector $\mathbf{X}_t\vcentcolon=\mathbf{X}_0+\sqrt{t} \,\mathbf{N}$, which can be seen as a noisy version of $\mathbf{X}_0$ with the amount of noise growing in time $t$ (the variance increases linearly with $t$).
The probability density of $\mathbf{X}_t$ is the probability density of $\mathbf{X}_0$ convolved with a Gaussian distribution of covariance $t\mathbf{I}$, which describes a Gaussian diffusion process in time $t$ characterized by the diffusion equation
\begin{equation}
\frac{\dd}{\dd t} \, p_t(\mathbf{x}) = \frac{1}{2} \,  \Delta p_t  
\label{eq:diffusion-eq}
\end{equation}
where $\Delta=\bm{\nabla}^2$ is the Laplacian and $p_t(\mathbf{x})$ denotes
the probability density of $\mathbf{X}_t$.

De Bruijn's identity expresses the rate of increase of the Shannon entropy of the probability density $p_t(\mathbf{x})$ when it undergoes such a Gaussian diffusion. It is written as \cite{CoverThomas}
\begin{align}
\label{eq:deBruijn}
    \frac{\dd}{\dd t} \,
    h(\mathbf{X}_0+\sqrt{t}\mathbf{N})
    =
    \frac{1}{2} \, J(\mathbf{X}_0+\sqrt{t}\mathbf{N})
\end{align}
where $J(\mathbf{X})$ is the Fisher information of $\mathbf{X}$. The latter is defined based on the logarithmic gradient $\bm{\nabla}\ln p$, namely \begin{align}
J(\mathbf{X}) &=\int p(\mathbf{x}) \, \Vert \bm{\nabla}\ln p\Vert^2\dd\mathbf{x}, \label{eq:bad1}\\
&= \int  \bm{\nabla} p \cdot \bm{\nabla} \ln p \,\, \dd\mathbf{x}, \label{eq:good}\\
&= \int p^{-1}(\mathbf{x}) \, \Vert\bm{\nabla} p \Vert^2\dd\mathbf{x}, \label{eq:bad2}
\end{align}
where we have used $\bm{\nabla} \ln p =(\bm{\nabla} p ) / p$. Note that the above three definitions of the Fisher information are equivalent as long as $p(\mathbf{x})\ge 0$, $\forall \mathbf{x}$, but we shall see that only the second one is suitable when moving to the complex plane. De Bruijn's identity \eqref{eq:deBruijn} can be verified by explicitly calculating the time derivative of $h(\mathbf{X}_t)$ and using Eq.~\eqref{eq:diffusion-eq} as well as the definition of the Fisher information $J(\mathbf{X})$, see~\cite{CoverThomas}. Note also that $J(\mathbf{X})\ge 0$, expressing the fact that $\mathbf{X}_t$ can only spread over time. 

Note that de Bruijn's identity makes use of a special case of the definition of the Fisher information.
In general, the Fisher information is introduced for a family of distributions $\lbrace p_{\bm{\theta}}\rbrace$ which depend on a set of parameters $\bm{\theta}=(\theta_1,\theta_2,...)$.
The Fisher information is then defined as $J_{\bm{\theta}}(p_{\bm{\theta}})=\int p_{\bm{\theta}}(\mathbf{x}) \, \Vert\nabla_{\bm{\theta}}\ln p_{\bm{\theta}}(\mathbf{x})\Vert^2\dd\mathbf{x}$.
When the parameters $\bm{\theta}$ act as a translation over the distributions, so that $p_{\bm{\theta}}(\mathbf{x})=p(\mathbf{x}-\bm{\theta})$, the Fisher information simplifies to Eq.~\eqref{eq:bad1} and is independent of $\bm{\theta}$ \cite{CoverThomas}.
It is this particular non-parametric definition of the Fisher information that we will extend to the complex plane.

\subsection{Complex Fisher information}

Here, we will show that it is possible to derive a similar relation for an arbitrary Wigner function, even when it becomes negative.
In this case, both terms of de Bruijn's identity \eqref{eq:deBruijn} become complex-valued, so we need to introduce a complex-valued Fisher information too. Crucially, the diffusion equation \eqref{eq:diffusion-eq} holds regardless of the sign of $p(\mathbf{x})$, so it describes the evolution of any Wigner function under additive Gaussian noise (it is valid for positive as well as negative Wigner functions). Since we consider the Wigner function of a single mode, we deal with the 2-dimensional vector $\mathbf{x}=(x,p)^T$ of canonically conjugate coordinates. 
We define the complex Fisher information associated with any Wigner function $W(x,p)$ as the complex continuation of definition \eqref{eq:good}, that is,
\begin{align}
J_{\mathrm{c}}(W) &= \int  \bm{\nabla} W \cdot \bm{\nabla} \ln W \,\,  \dd x \, \dd p \nonumber\\
&= J_{\mathrm{r}}(W) + i \, J_{\mathrm{i}}(W),
\label{eq:def_complex_Fisher}
\end{align}
where $\ln W$ stands for the complex logarithm of $W$.
The real part of the complex-valued Fisher information is thus given by
\begin{align}
J_{\mathrm{r}}(W) &= \int  \bm{\nabla} W \cdot \bm{\nabla} \ln |W| \,\, \dd x \, \dd p .
\end{align}
One can write (notice the absence of absolute value on the right hand side)
\begin{equation}
\bm{\nabla} \ln |W| = \frac{1}{W} \, \bm{\nabla} W ,
\end{equation}
so we also recover the other two definitions
\begin{align}
J_{\mathrm{r}}(W) &= \int   W(x,p) \, \Vert \bm{\nabla} \ln |W| \, \Vert^2\,\, \dd x \, \dd p ,\nonumber \\
&= \int   W^{-1}(x,p) \, \Vert \bm{\nabla} W \Vert^2\,\, \dd x \, \dd p ,
\end{align}
which are the counterparts of Eqs. \eqref{eq:bad1} and \eqref{eq:bad2}.
These alternative definitions seem to suggest that $J_{\mathrm{r}}(W) \ge 0$ similarly as in the classical case, but we do not have a proof of this inequality (neither do we know if it holds).


From Eq.~\eqref{eq:def_complex_Fisher}, the imaginary part of the complex Fisher information is expressed as 
\begin{align}
J_{\mathrm{i}}(W) &= \int  \bm{\nabla} W \cdot \bm{\nabla} \arg W \,\, \dd x \, \dd p, \nonumber \\ &= \pi \int  \bm{\nabla} W \cdot \bm{\nabla} \mbox{\Large$\chi$} \,\, \dd x \, \dd p , 
\label{eq:imaginary_Fisher_info_initial}
\end{align}
where we have defined the Wigner-negativity indicator function
\begin{align}
    \mbox{\Large$\chi$}(x,p)
    \coloneqq \bm{1}_{(x,p)\in \cal D} =
    \begin{cases}
        0\qquad\text{if}\;W(x,p)\geq 0
        \\
        1\qquad\text{if}\;W(x,p)< 0
    \end{cases}
\end{align}
with $\cal D$ being the negative domain of $W$ in phase space (it may consist of several non-contiguous areas). Applying the nabla $\bm{\nabla}$ operator onto an indicator function gives the so-called surface delta-function, which vanishes everywhere except on the boundary $\partial D$ of $\cal D$, where it points in the (inward) normal  direction. It is well defined when integrated over phase space, so 
using the identity 
\begin{equation}
\bm{\nabla}(f \bm{G})=\bm{\nabla}f \cdot \bm{G} + f \,  \bm{\nabla} \bm{G}
\label{eq:nabla_identity}
\end{equation}
 and substituting $f$ with \mbox{\Large$\chi$} and $\bm{G}$ with $\bm{\nabla}W$, 
 we may rewrite Eq.~\eqref{eq:imaginary_Fisher_info_initial} as
 \begin{align}
J_{\mathrm{i}}(W) &= \pi \underbrace{\int 
\bm{\nabla}(\mbox{\Large$\chi$} \, \bm{\nabla}W) \,\, \dd x \, \dd p \,}_{\displaystyle \oint_{C_\infty} \!\!\! \mbox{\large$\chi$} \, \bm{\nabla}W \cdot \bm{n} \, \dd s  =0} - \, \pi \int 
\mbox{\Large$\chi$} \,  \Delta W \,\, \dd x \, \dd p , 
\end{align}
where the volume integral $\int \bm{\nabla}(\cdots)\, \dd x \, \dd p$ can be replaced by a contour integral $\oint (\cdots)\, \bm{n} \, \dd s$ on the contour $C_\infty$ at infinity, which vanishes since 
since 
$\mbox{\Large$\chi$}=0$ (and $\bm{\nabla}W$ is finite) on the contour $C_\infty$. Thus, the imaginary part of the Fisher information can be expressed as
\begin{align}
J_{\mathrm{i}}(W) &= - \pi \int 
\mbox{\Large$\chi$} \,  \Delta W \,\, \dd x \, \dd p , \nonumber \\
&= - \pi \int_{\cal D} \Delta W \,\, \dd x \, \dd p ,
\nonumber \\
&= - \pi \oint_{\partial D} 
\bm{\nabla} W \cdot \bm{n} \, \dd s . \label{eq:imaginary_Fisher_info_final}
\end{align}
which is a contour integral over the boundary $\partial D$ of the (possibly multiple) negative domain $\cal D$ in phase space. Here, $\bm{n}$ is a normal vector to the boundary (conventionally pointing outwards, hence the minus sign) and $\dd s$ denotes the infinitesimal element along the boundary. 
Note that $J_{\mathrm{i}}(W)\le 0$ since any Wigner function $W$ can only have (possibly multiple) negative domains in an overall positive infinite domain, thus $\bm{\nabla} W$ always points outwards along the boundary $\partial D$. 

As expected, the complex Fisher information enjoys  invariance under displacements and rotations in phase space (\textit{i.e.}, passive Gaussian unitaries), but not squeezing operations.
\begin{prop} The real and imaginary parts of $J_{\mathrm{c}}(W)$ are invariant under translations and orthogonal symplectic transformations in phase space. 
\end{prop}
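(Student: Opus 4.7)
The plan is to treat the translation and linear parts of the affine transformation separately, and to derive the invariance of both $J_r$ and $J_i$ from the transformation properties of $\bm{\nabla}W$. Translation invariance will be immediate: under $\mathbf{x}'=\mathbf{x}+\mathbf{d}$, the integrand in \eqref{eq:def_complex_Fisher} is merely shifted and Lebesgue measure on $\mathbb{R}^2$ is translation-invariant, so $J_c(W')=J_c(W)$. The nontrivial content lies in the linear part.

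For $\mathbf{x}'=\mathcal{S}\mathbf{x}$ with $\mathcal{S}$ symplectic, I would invoke $W'(\mathbf{x}')=|\det\mathcal{S}|^{-1}W(\mathcal{S}^{-1}\mathbf{x}')=W(\mathbf{x})$ (using $|\det\mathcal{S}|=1$) together with the chain rule to obtain $\bm{\nabla}_{\mathbf{x}'}W'=(\mathcal{S}^{-1})^T\bm{\nabla}_{\mathbf{x}}W$, and the identical law for $\bm{\nabla}_{\mathbf{x}'}\ln W'$ since the complex logarithms agree pointwise on the same principal branch. Substituting into \eqref{eq:def_complex_Fisher} and using the unit Jacobian $\dd x'\,\dd p'=\dd x\,\dd p$, the integrand becomes
\begin{equation}
(\bm{\nabla}W)^T\,\mathcal{S}^{-1}(\mathcal{S}^{-1})^T\,\bm{\nabla}\ln W,
\end{equation}
which reduces to $\bm{\nabla}W\cdot\bm{\nabla}\ln W$ if and only if $\mathcal{S}^{-1}(\mathcal{S}^{-1})^T=I$, i.e., if and only if $\mathcal{S}$ is orthogonal. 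Since in one mode the symplectic-and-orthogonal matrices form exactly the rotation group $SO(2)$, $J_c$ will be invariant under rotations and translations; separating real and imaginary parts then yields the claim for $J_r$ and $J_i$ individually.

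The step I expect to be the main obstacle is justifying the chain rule for $\bm{\nabla}\ln W$ across the boundary $\partial D$, where $W$ changes sign and $\arg W$ jumps by $\pi$, making the gradient distributional. The cleanest workaround is to invoke the alternative representations already established in Sec.~\ref{sec:Complex-Fisher}: for the real part, I would use $J_r(W)=\int W^{-1}\Vert\bm{\nabla}W\Vert^2\,\dd x\,\dd p$, in which no logarithm appears, so invariance follows at once from the fact that an orthogonal $\mathcal{S}$ preserves $\Vert\bm{\nabla}W\Vert$; for the imaginary part, I would use the contour representation \eqref{eq:imaginary_Fisher_info_final}, $J_i(W)=-\pi\oint_{\partial D}\bm{\nabla}W\cdot\bm{n}\,\dd s$, whose only ingredients are the zero-set $\partial D$, the gradient field along it, the outward unit normal, and Euclidean arc length, all of which transform rigidly under rotations and translations. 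This two-pronged treatment should entirely avoid the singular behaviour at $\partial D$, and it will simultaneously make transparent why squeezing breaks the invariance: a non-orthogonal symplectic $\mathcal{S}$ dilates $\Vert\bm{\nabla}W\Vert$ anisotropically, so neither $J_r$ nor the boundary integral expressing $J_i$ survives.
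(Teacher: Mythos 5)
Your proposal is correct and follows essentially the same route as the paper: translation invariance from the shift-invariance of Lebesgue measure, invariance of $J_r$ via the representation $\int W^{-1}\Vert\bm{\nabla}W\Vert^2\,\dd x\,\dd p$ together with $\Vert\mathcal{S}\bm{\nabla}W\Vert=\Vert\bm{\nabla}W\Vert$ and unit Jacobian for orthogonal symplectic $\mathcal{S}$, and invariance of $J_i$ via a representation free of $\arg W$. The only cosmetic difference is that for $J_i$ the paper uses the area-integral form $-\pi\int_{\mathcal{D}}\Delta W\,\dd x\,\dd p$ and the rigid-motion invariance of the Laplacian, whereas you use the adjacent contour-integral form $-\pi\oint_{\partial D}\bm{\nabla}W\cdot\bm{n}\,\dd s$; these are two lines of the same displayed equation and the argument is equivalent.
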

\begin{proof}
   As in the proof of Property \ref{invariance-real-entropy}, we consider the symplectic transformation $\mathbf{X'}=\mathcal{S}\mathbf{X}+\mathbf{d}$ but we add the restriction that $\mathcal{S}$ is an orhogonal symplectic matrix, that is, $\mathcal{S}^{-1}=\mathcal{S}^T$. Then, 
   we use the change of basis formula for the nabla operator, namely $\bm{\nabla}=
   \mathcal{S}^T 
  \bm{\nabla'}$, which implies $\bm{\nabla'}=
   \mathcal{S}\, 
  \bm{\nabla}$ since $\mathcal{S}$ is orthogonal. Therefore,
\begin{eqnarray} J_{\mathrm{r}}(W')&=&\int(W'(x',p'))^{-1} \Vert \bm{\nabla'} W'(x',p') \Vert^2\,\, \dd x' \, \dd p'\nonumber\\
      &=&\int\frac{|\rm{det}\,\mathcal{S}|}{W(x,p)}\frac{\Vert \mathcal{S}\bm{\nabla}W(x,p)\Vert^2}{|\rm{det}\,\mathcal{S}|^2} |\rm{det}\,\mathcal{S}|\,\dd x\,\dd p\nonumber\\
      &=& J_{\mathrm{r}}(W).
\end{eqnarray}
For the imaginary part, we use the second definition in Eq. \eqref{eq:imaginary_Fisher_info_final} involving the Laplacian of $W$ and exploit the invariance of the latter operator under rigid motions, 
$\Delta'=\bm{\nabla'}^T\bm{\nabla'}=\bm{\nabla}^T \mathcal{S}^T \mathcal{S}\,\bm{\nabla}=\bm{\nabla}^T \bm{\nabla}=\Delta$, hence
\begin{eqnarray} 
      J_{\mathrm{i}}(W')&=&  - \pi \int_{\cal D'} \Delta' W'(x',p') \,\, \dd x' \, \dd p'\nonumber\\
      &=&  - \pi \int_{\cal D} \frac{\Delta W(x,p)}{|\rm{det}\,\mathcal{S}|} \, |\rm{det}\,\mathcal{S}|\,\, \dd x \, \dd p \nonumber\\
      &=&J_{\mathrm{i}}(W).
\end{eqnarray}
\end{proof}

For illustration, the value of the complex Fisher information has been plotted in Fig.~\ref{fig:complex_fishinfo} for randomly generated (pure and mixed) states as well as for the first few Fock states.

\begin{figure}[t]
    \centering
    \includegraphics[width=\linewidth]{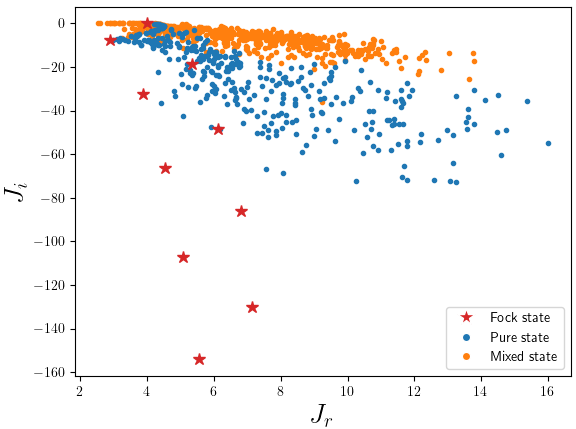}	
    \caption{Complex Fisher information $J_{\mathrm{c}} = J_{\mathrm{r}} + i \, J_{\mathrm{i}}$ of randomly generated states. Each blue (orange) point is associated with a random pure (mixed) state. Red stars represent Fock states $\ket{n}$ with $n=0,1,..., 9$. It appears that $J_{\mathrm{i}}$ decreases monotonically with $n$ for Fock states.}
    \label{fig:complex_fishinfo}
\end{figure}

\subsection{Complex de Bruijn's identity}

Let us now prove the complex version of de Bruijn's identity, which is expressed as 
\begin{equation}
    \frac{\dd}{\dd t}
    h_{\mathrm{c}}(W) = \frac{1}{2} J_{\mathrm{c}}(W) .
\label{eq:def_complex_deBruijn}
\end{equation}
\begin{proof}
We must calculate the time derivative of the complex Wigner entropy, namely
\begin{align}
    \lefteqn{ \frac{\dd}{\dd t}
    h_{\mathrm{c}}(W)
     }
    \nonumber
    \\
    &= -\frac{\dd}{\dd t}
    \int
    W
    \ln W \,
    \dd x \, \dd p \nonumber \\ &=
    -
    \int
    \left(
    \frac{\dd}{\dd t} W
    \right)
    \ln W \,
    \dd x \, \dd p
    -\underbrace{
    \int
    W \, 
    \frac{\dd}{\dd t}
    \ln W \,
    \dd x \, \dd p.
    }_{=0}
\label{eq:complex_deBruijn_1st}
    \end{align}
It is easy to show that both the real and imaginary parts of the second term of Eq. \eqref{eq:complex_deBruijn_1st} vanish. For the real part, 
    \begin{align}
 \int
    W \, 
    \frac{\dd}{\dd t} 
    \ln |W| \,
    \dd x \, \dd p &=
    \int \frac{\dd W}{\dd t}
     \,
    \dd x \, \dd p \nonumber \\ &= \frac{\dd }{\dd t}
     \underbrace{\int 
    W \,\dd x \, \dd p}_{=1} \nonumber\\ &=0 ,
    \end{align}
and for the imaginary part 
\begin{align}
       \lefteqn{ \int
    W \, 
    \frac{\dd}{\dd t} 
    \arg W \,
    \dd x \, \dd p } \hspace{1.5cm} \nonumber \\ &=\pi \int
    W \, 
    \frac{\dd}{\dd t} 
\left( \frac{1}{2} - \frac{|W|}{2W} \right)\,
    \dd x \, \dd p \nonumber \\ &=-\frac{\pi}{2} \int
\left(   \frac{\dd |W|}{\dd t} - \frac{|W|}{W} \, \frac{\dd W}{\dd t} \right) \,
    \dd x \, \dd p
\nonumber \\  
&=-\frac{\pi}{2} \int
\underbrace{\left(   \frac{\dd |W|}{\dd W} - \frac{|W|}{W}  \right)}_{=0} \, \frac{\dd W}{\dd t} \,
    \dd x \, \dd p
\nonumber \\ 
&=0.    \end{align}
This integral is zero simply because $W(x,p)$ vanishes along the boundary $\partial D$ of the negative domain $\cal D$.

Using the diffusion equation for the Wigner function (which is valid for all Wigner functions in $\cal W$), namely
\begin{equation}
\frac{\dd}{\dd t} W =  \frac{1}{2}\Delta W ,
\label{eq:diffusion_Wigner}
\end{equation}
we may rexpress Eq. \eqref{eq:complex_deBruijn_1st} as
    \begin{align}\lefteqn{ \frac{\dd}{\dd t}
    h_{\mathrm{c}}(W)
     }
    \nonumber
    \\
    &=
    -\frac{1}{2}
    \int
    \Delta W
    \ln W \,
    \dd x \, \dd p
    \nonumber
    \\
    &=
    -\frac{1}{2}
    \int
    \bm{\nabla}(\bm{\nabla}W
    \ln W) \, \dd x \, \dd p
    +\frac{1}{2}
    \int
    \bm{\nabla}W \!\cdot\! 
    \bm{\nabla}\ln W
    \, \dd x \, \dd p
    \nonumber
    \\
    &=
    -\frac{1}{2}
    \underbrace{
    \oint_{C_\infty}
    \ln W \, \bm{\nabla}W
    \cdot \bm{n} \, \dd s}_{=0}
    +\frac{1}{2}
    \underbrace{ \int
    \bm{\nabla}W
    \!\cdot\! \bm{\nabla}\ln W
    \, \dd x \, \dd p}_{J_{\mathrm{c}}(W)} ,
\end{align}
where we have used the identity \eqref{eq:nabla_identity}
substituting $f$ with $\ln W$ and $\bm{G}$ with $\bm{\nabla}W$.
The first term vanishes as it is a contour integral whose integrand $\ln W\, \bm{\nabla}W$ tends to zero at infinite distance from the origin in phase space. 
\end{proof}

It should be emphasized that this proof, although it looks partly similar to the corresponding one for probability distributions, applies to the complex plane (all logarithms should be understood as complex logarithms). 
Let us also mention that a proof for the real part of the complex de Bruijn's identity, Eq.~\eqref{eq:def_complex_deBruijn}, has been presented in Appendix A of Ref. \cite{pizzimenti2023complexvalued}.


Finally, we remark that the diffusion equation obeyed by the Wigner function, Eq.~\eqref{eq:diffusion_Wigner}, can be used in order to rewrite the second definition in
Eq. \eqref{eq:imaginary_Fisher_info_final} as
\begin{align}
J_{\mathrm{i}}(W) &= - 2\pi \int_{\cal D} \frac{\dd W}{\dd t} \,\, \dd x \, \dd p ,
\nonumber \\
&= - 2\pi \, \frac{\dd}{\dd t} \int_{\cal D} W \,\, \dd x \, \dd p , \nonumber \\
&= 2\pi \, \frac{\dd}{\dd t}\text{Vol}_{-}(W) , \nonumber \\
&=2 \, \frac{\dd}{\dd t} h_{\mathrm{i}}(W),
\end{align}
confirming the imaginary part of the complex de Bruijn's identity, Eq.~\eqref{eq:def_complex_deBruijn}.



\section{Discussion and perspectives}
\label{sec:Conclusion}

We have defined the complex Wigner entropy of an arbitrary state as the complex continuation of the Shannon differential entropy of its Wigner-function $W(x,p)$, namely
\begin{equation}
	h_{\mathrm{c}}(W)
	= -\int_{\mathbb{R}^2} W(x,p)
	\ln W(x,p)
	\, \dd x \, \dd p  ,
\end{equation}
where $\ln(\cdot)$ stands for the complex logarithm function. Following the same logic, the complex Fisher information has been defined as
\begin{equation}
J_{\mathrm{c}}(W) = \int_{\mathbb{R}^2}  \left(\frac{\dd W}{\dd x} \, \frac{\dd (\ln W)}{\dd x}  + \frac{\dd W}{\dd p} \,\frac{\dd (\ln W)}{\dd p} \right) \dd x \, \dd p
\end{equation}
These two definitions purposely coincide with the usual ones as soon as the state is Wigner positive, that is, $W(x,p)>0$, $\forall x,p$, since the Wigner function is then a genuine joint probability distribution. Thus, for such states, $\Im(h_{\mathrm{c}})=\Im(J_{\mathrm{c}})=0$, whereas   $\Re(h_{\mathrm{c}})$ and $\Re(J_{\mathrm{c}})$ inherit all properties of their counterparts in Shannon information theory. For example, the Wigner entropy is then a concave and subadditive functional of the state, and it obeys a Wigner entropy-power inequality \cite{PhysRevA.104.042211}.

In contrast, when the state is Wigner negative, that is, $\exists (x,p):W(x,p)<0$, its Wigner entropy admits a strictly positive imaginary part $\Im(h_{\mathrm{c}})$, which is proportional to the Wigner negative volume. Although the complex-valued entropy $h_{\mathrm{c}}$ loses some of its properties, such as concavity, it remains invariant when the state undergoes any Gaussian unitary (\textit{i.e.}, displacement, rotation, or squeezing in phase space). Both the real and imaginary parts of $h_{\mathrm{c}}$ can also be related to the real and imaginary parts of $J_{\mathrm{c}}$, respectively, when the state undergoes a Gaussian diffusion process (\textit{i.e.}, a convolution with a Gaussian noise whose variance grows linearly with time). As a consequence of a complex extension of de~Bruijn's identity, $\Re(h_{\mathrm{c}})$ evolves with a rate $\propto \Re(J_{\mathrm{c}})$ and, similarly, $\Im(h_{\mathrm{c}})$ evolves with a rate $\propto \Im(J_{\mathrm{c}})$. For a Wigner-positive state, $J_{\mathrm{c}}=\Re(J_{\mathrm{c}})\ge 0$ becomes the usual Fisher information, hence $h_{\mathrm{c}}=\Re(h_{\mathrm{c}})$ can only increase with time. This translates the fact that the Wigner function $W(x,p)$ expands with time. For a Wigner-negative state, $\Im(J_{\mathrm{c}})<0$, so that $\Im(h_{\mathrm{c}})$ can only decrease with time. This is consistent with the fact that the Wigner negative volume can only decrease along this Gaussian diffusion process, until it reaches zero. Once the state becomes Wigner-positive, $\Im(h_{\mathrm{c}})$  remains equal to zero along further time evolution. 

Note that both $\Im(h_{\mathrm{c}})$ and $\Im(J_{\mathrm{c}})$ have a straightforward interpretation involving the Wigner negative domain $\cal D$. As already mentioned, $\Im(h_{\mathrm{c}})$ is proportional to (minus) the volume integral of $W$ over $\cal D$, \textit{i.e.}, the negative volume, while $\Im(J_{\mathrm{c}})$ is proportional to (minus) the contour integral of the gradient of $W$ over the boundary $\partial D$ of the negative domain $\cal D$. Just as in Shannon information theory, the imaginary Fisher information can thus be linked to the derivative of the imaginary Wigner entropy. Note also that our analysis in this paper is restricted to Wigner functions associated with a single pair of canonically conjugate coordinates $(x,p)$, that is, a single bosonic mode, but our results should be easily generalizable to a multidimensional (multimode) case.

A possible application of the complex Wigner entropy may arise in the context of entropic uncertainty relations \cite{Hertz_2019}. For Wigner-positive states, we have 	\begin{equation}
h(\rho_x)+h(\rho_p)\stackrel{\mathrm{(a)}}{\geq}	\Re(h_{\mathrm{c}}(W))\stackrel{\mathrm{(b)}}{\geq}\ln\pi+1 ,
\label{eq:entropic-uncertainty-relation-conclusion}
	\end{equation}
where $\Re(h_{\mathrm{c}}(W))=h_{\mathrm{c}}(W)$ since it is real-valued.
More precisely, inequality (a) directly follows from the subadditivity of Shannon's differential entropy, while inequality (b) as conjectured in \cite{PhysRevA.104.042211} yields a tight entropic uncertainty relation \cite{Hertz_2017}. Of course, chaining inequalities (a) and (b) yields the entropic uncertainty relation due to Białynicki-Birula and Mycielski \cite{BialynickiBirula1975}. Clearly, Eq.~\eqref{eq:entropic-uncertainty-relation-conclusion} implies that there is a forbidden zone in the complex entropy plane along the real axis below $\ln\pi+1$. Moving to arbitrary (negative) Wigner functions, this strongly suggests that the complex Wigner entropy of a physical state is constrained to lie in some allowed area in the complex plane. The complex entropic uncertainty relation would then be expressed as $h_{\mathrm{c}} \in \mathcal{A}$, with $\mathcal{A}$ being the allowed area. This topic is worth further investigation.

Arguably, a weakness of our approach is that we miss a fully satisfactory understanding of the physical meaning behind the complex-valued entropy $h_{\mathrm{c}}$ (even its real part is not associated with a well understood property when the state is Wigner negative). We believe -- although we have not been able to do it -- that a satisfactory operational interpretation of the complex Wigner entropy could be obtained by extending the notion of typical volume to the complex plane. For a Wigner-positive state, Shannon information theory tells us that a sequence of $n$ independent instances of $W(x,p)$ populates with high probability a typical volume $\sim \exp(n\, h(W))$ in Euclidean space $\mathbb{R}^{2n}$. For a Wigner-negative state, it is tempting to keep the same expression and conclude that the typical volume becomes in general complex since the entropy is itself complex, translating the fact that for each instance of the sequence, $x$ and $p$ cannot be simultaneously defined (unlike two classical variables). Just as the entropy is related to the typical volume, the Fisher information is related to the surface area of the typical set, hence the complex Fisher information is another hint at the notion of complex typical volume.
If meaningful, this notion should of course be put on firmer grounds\footnote{If we take the definition $e^{n\, h_{\mathrm{c}}(W)}$ of the complex typical volume for granted, this may imply that our arbitrary choice of the (multivalued) imaginary part of the logarithm in Eq.~\eqref{eq:def_complex_logarithm} is actually insignificant because it is the exponential of the complex Wigner entropy $e^{h_{\mathrm{c}}(W)}$ that is the physically relevant quantity.}.

On a more hypothetical note, the analytic continuation of the Shannon entropy functional in the complex plane suggests that one could go even one step further and consider the entropy of complex-valued functions. Here, the Wigner function is real-valued (albeit it can be both positive or negative) since the density operator is Hermitian, but we may possibly define the complex entropy of the complex-valued Weyl transform of non-Hermitian operators. All tools of complex analysis, such as the residue theorem, could possibly be used here. Yet, there remains to find a good interpretation of such a complex entropy.



\bigskip
\noindent \textbf{Note added:} The early steps of this work have been reported in  the Ph.D. thesis of one of us \cite{HertzPhD2018}, but the present paper presents a more complete analysis of complex entropies in phase space. After completion of this work, we have become aware of a few papers where the Shannon differential entropy associated with a Wigner function had been mentioned, although mostly via numerical investigations, for Wigner-positive \cite{Guevara2003-na} and Wigner-negative states \cite{Laguna2010-ir, Salazar2023-yf}.
Recently, building on the present work, the relative entropy version of the Wigner entropy (including its complex extension) has also been put forward as a means to measure the non-Gaussianity of a quantum state \cite{pizzimenti2023complexvalued}.

\bigskip
\noindent \textbf{Acknowledgments:} We would like to thank Michael G. Jabbour for useful discussions at an early stage of this work. 
N.J.C. is grateful to the James C. Wyant College of Optical Sciences for hospitality during his sabbatical leave in the autumn 2022, when this work was completed. Z.V.H. acknowledges support from the Belgian American Educational Foundation and from the Army Research Office (ARO) MURI Program Project on Quantum Network Science, "Theory and Engineering of Large-Scale Distributed Entanglement", awarded under grant number W911NF2110325. N.J.C. acknowledges support by the Fonds de la Recherche Scientifique – FNRS under Grant No T.0224.18 and by the European Union under project ShoQC within ERA-NET Cofund in Quantum Technologies (QuantERA) program.
A.H. acknowledges that the NRC headquarters is located on the traditional unceded territory of the Haudenosaunee and Mohawk people.

\bigskip
\noindent \textbf{Conflict of interest:}
The authors have no conflicts to disclose.

\bigskip
\noindent \textbf{Data availability:}
The data that support the findings of this study are available from the corresponding author upon reasonable request.

\bibliography{ComplexWigner}

\clearpage
\newpage
\appendix
\onecolumngrid
\section{Properties of real-valued extensions of the Wigner entropy}
\label{sec:real-valued}

\subsection{Numerical exploration of real-valued functionals}

In this Appendix, we numerically investigate the properties of the real-valued functionals defined in Sec. \ref{sec:alternative}.
For ease of reading, we recall here the expressions of the real entropy $h_{\mathrm{r}}$, the absolute entropy $h_{\mathrm{a}}$, the positive entropy $h_+$ and the negative entropy $h_-$:
\begin{align*}
    h_{\mathrm{r}}(W)
    =
    -\iint W(x,p)\ln\abs{W(x,p)}\dd x\dd p,
    &&
    h_{\mathrm{a}}(W)
    =
    -\iint\abs{W(x,p)}\ln\abs{W(x,p)}\dd x\dd p,
    \\
    h_+(W)
    =
    -\iint\limits_{W(x,p)\geq 0}W(x,p)\ln\abs{W(x,p)}\dd x\dd p,
    &&    
    h_-(W)
    =
    \iint\limits_{W(x,p)\leq 0}W(x,p)\ln\abs{W(x,p)}\dd x\dd p.
\end{align*}

In the following, we test whether $h_{\mathrm{r}}$, $h_{\mathrm{}}$ or $h_+$ are lower bounded by $\ln\pi+1$ (see property 5 in Sec. \ref{sec:alternative}) and if they satisfy the subadditivity property (\textit{i.e.}, $h_{r/a/+}(W)\leq h(\rho_x)+h(\rho_p)$, see property 4 in Sec. \ref{sec:alternative}).
We compare the values taken by the different functionals for a set of randomly generated pure states, namely, random superpositions of the first ten Fock states. 
We will only look at pure states here as we expect them to be the most likely to saturate uncertainty relations.
Figs. \ref{fig:real_entropy}, \ref{fig:positive_entropy}, \ref{fig:absolute_entropy} display the value taken by the above mentioned functionals (left side of each figure) and the value of the difference $h(\rho_x)+h(\rho_p)-h_{r/a/+}(W)$ (right side of each figure).
We plot each quantity as a function of the negative volume\footnote{We make this choice because the imaginary part of the complex-valued Wigner entropy is proportional to $\Vol_{-}(W)$, see Sec. \ref{sec:Complex-Wigner}.}, \textit{i.e.}, $\mathrm{Vol}_-(W)=-\iint_{W(x,p)<0}W(x,p)\dd x\dd p$.
The negative entropy $h_{-}$ is also plotted for illustration in Fig. \ref{fig:negative_entropy} although we do not expect it to be a possible extension of the Wigner entropy (nor to be lower bounded by $\ln\pi+1$). 
The functional $h_-$ will be a quantity of interest in the following of this Appendix (see Sec. \ref{sec:lower-bound}).

\subsubsection{Minimum value of the functionals}
Let us first look at the values taken by $h_{r/a/+}$ (left sides of Figs. \ref{fig:real_entropy}, \ref{fig:positive_entropy}, \ref{fig:absolute_entropy}) and see whether any of them is lower bounded by $\ln\pi+1$ (property 5). 
From the different graphs, it appears that $h_{\mathrm{r}}$ sometime takes values strictly lower than $\ln\pi+1$ (see Fig. \ref{fig:real_entropy}) whereas
both the functionals $h_{+}$ and $h_{\mathrm{}}$ seem to satisfy the expected lower bound $\ln\pi+1$ (see Figs. \ref{fig:positive_entropy} and  \ref{fig:absolute_entropy}).
However, since $h_{+}$ is always lower than or equal to $h_{\mathrm{}}$ [\textit{i.e.}, $h_{+}(W)\leq h_{\mathrm{}}(W)$, $\forall W\in\mathcal{W}$], the functional $h_{+}$ should give a tighter bound than $h_{a}$. Unfortunately, we have not been able to prove these bounds. 

It is interesting to note in Fig. \ref{fig:positive_entropy} the apparent almost-linear relation between the negative volume $\Vol_{-}(W)$ and the positive entropy $h_{+}(W)$ for pure states.
It is not very surprising to observe that, for pure states, $h_{+}(W)$ grows with $\mathrm{Vol}_-(W)$: indeed, the negative volume is known to be a witness of nonclassicality, while classical pure states (coherent states) minimize most uncertainty relations.
Nonclassical states are thus more likely to yield a higher uncertainty.

Moreover, we note in Figs. \ref{fig:positive_entropy}, \ref{fig:absolute_entropy}, and \ref{fig:negative_entropy} that Fock states generally seem to have the have the lowest possible value of $h_{+}$, $h_{a}$, and $h_{-}$ for a fixed value of the negative volume $\Vol_{-}(W)$. This is reminiscent to the known property that Fock states are minimum-uncertainty states for a given degree of non-Gaussianity \cite{PhysRevA.86.030102}. Indeed, by considering uncertainty relations  
under some Gaussianity constraint, it was shown in Ref. \cite{PhysRevA.86.030102} that saturation is achieved by all Fock states and some specific mixtures of subsequent Fock states. Here, the negative volume $\Vol_{-}(W)$ plays the role of such a measure of non-Gaussianity (which, for pure states, goes together with Wigner negativity). However, we have not been able to prove this property here.

\subsubsection{Comparison with the sum of the marginal entropies}
Let us now look at the values taken by $h(\rho_x)+h(\rho_p)-h_{r/+/a}$ (right sides of Figs. \ref{fig:real_entropy}, \ref{fig:positive_entropy}, \ref{fig:absolute_entropy}) to see if any of these functionals could be a lower bound on the sum of the marginal entropies (property 4).
In each case, it clearly appears that counterexamples can be found such that $h_{\mathrm{r}}$, $h_+$ or $h_{\mathrm{}}$ is not a lower bound on $h(\rho_x)+h(\rho_p)$.
The case of the real entropy $h_{\mathrm{r}}$ is the more subtle, as it is \textit{most of the time} a lower-bound on the sum of the marginal entropies, except for some rare counterexamples.
Also, the violation $h(\rho_x)+h(\rho_p)-h_{\mathrm{r}}(W)<0$ for these counterexamples is rather small.
This contrasts with the case of $h_+$ and $h_{\mathrm{}}$ which appear to be greater than $h(\rho_x)+h(\rho_p)$ in general.
Note however that by squeezing the Wigner function along an axis not aligned with $x$ or $p$, it is possible to increase arbitrarily the value of $h(\rho_x)+h(\rho_p)$ while keeping $h_+(W)$ and $h_{\mathrm{}}(W)$ fixed, so that neither $h_+$ or $h_{\mathrm{}}$ can be an upper bound to the sum of the marginal entropies.

\begin{figure}[H]
	\includegraphics[width=\textwidth]{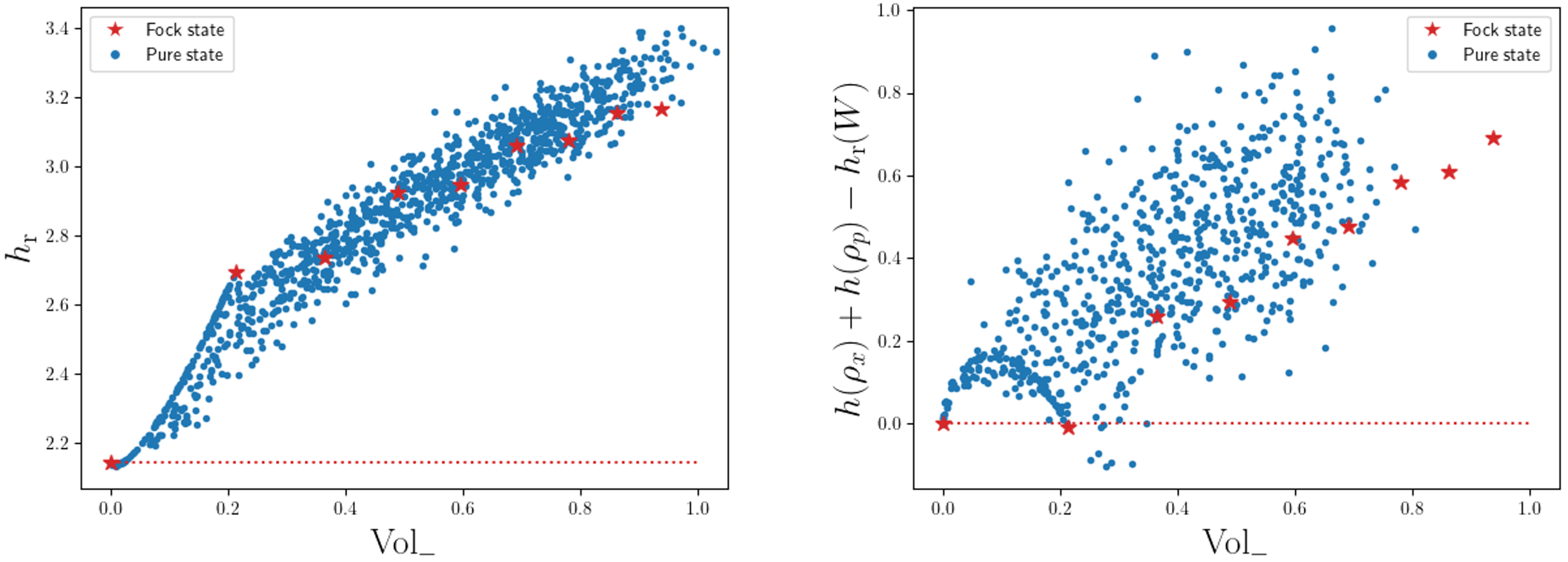}
	\caption{
	On the left, plot of the real Wigner entropy $h_{\mathrm{r}}(W)$ as a function of the negative volume $\Vol_{-}(W)$.
		Blue points are randomly generated pure states.
		Red stars are Fock states.
		As it appears, there exist some pure states which are such that $h_{r}(W)<\ln\pi+1$, contradicting property 5.
  On the right, plot of the difference $h(\rho_x)+h(\rho_p)-h_{\mathrm{r}}(W)$ for randomly generated pure states.
  Since the quantity is not always non-negative, we conclude that the real entropy $h_{\mathrm{r}}$ is not a lower bound on the sum of the marginal entropies, contradicting property 4.
	}
        \label{fig:real_entropy}
\end{figure}

\begin{figure}[H]
	\includegraphics[width=\textwidth]{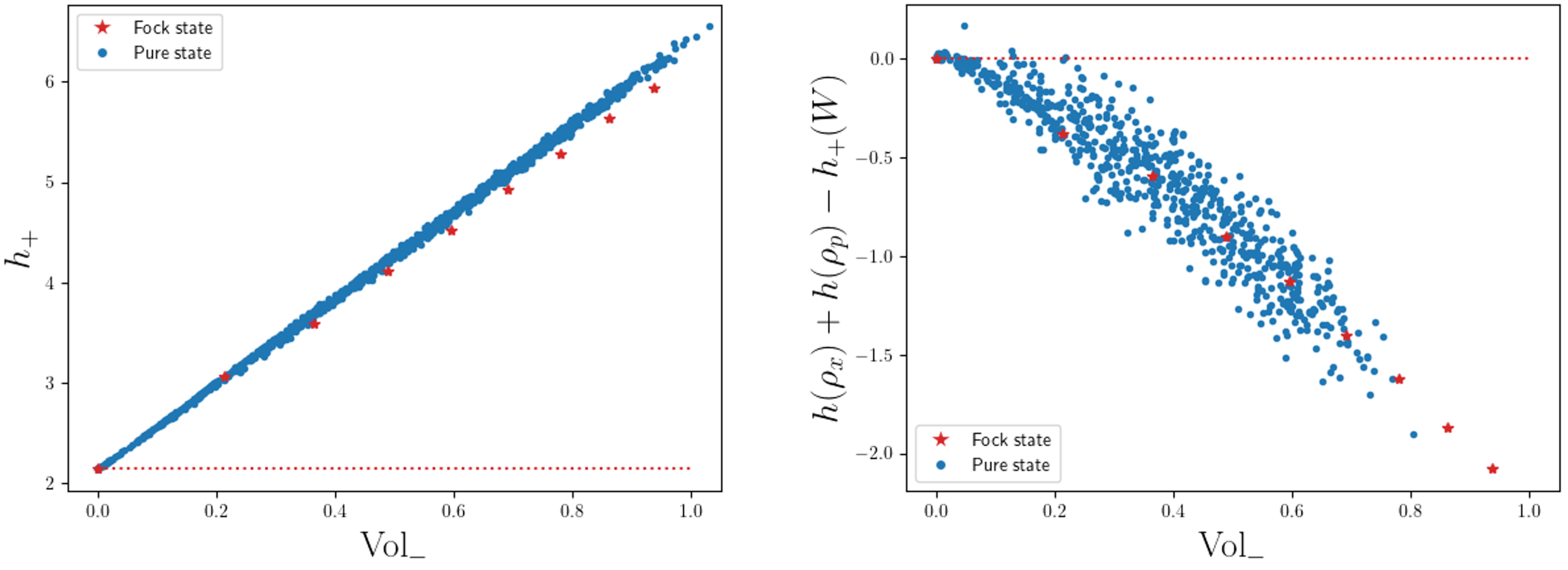}
	\caption{
		On the left, plot of the positive Wigner entropy $h_{+}(W)$ as a function of the negative volume $\Vol_{-}(W)$.  
		Blue points are randomly generated pure states.
		Red stars are Fock states.
		We observe empirically that pure states state obey a relation of the type $h_{+}(W)\approx \ln\pi+1+\kappa \Vol_{-}(W)$, with $\kappa\approx 4$.
  All of our randomly generated states are such that $h_+(W)\geq\ln\pi+1$, which suggests that the positive entropy satisfies property 5.
    On the right, plot of the difference $h(\rho_x)+h(\rho_p)-h_+(W)$ for randomly generated pure states.
  Since the quantity is sometime striclty negative, we conclude that the real entropy $h_{+}$ is not a lower bound on the sum of the marginal entropies, contradicting property 4.
	}
        \label{fig:positive_entropy}
\end{figure}

\begin{figure}[H]
\includegraphics[width=\textwidth]{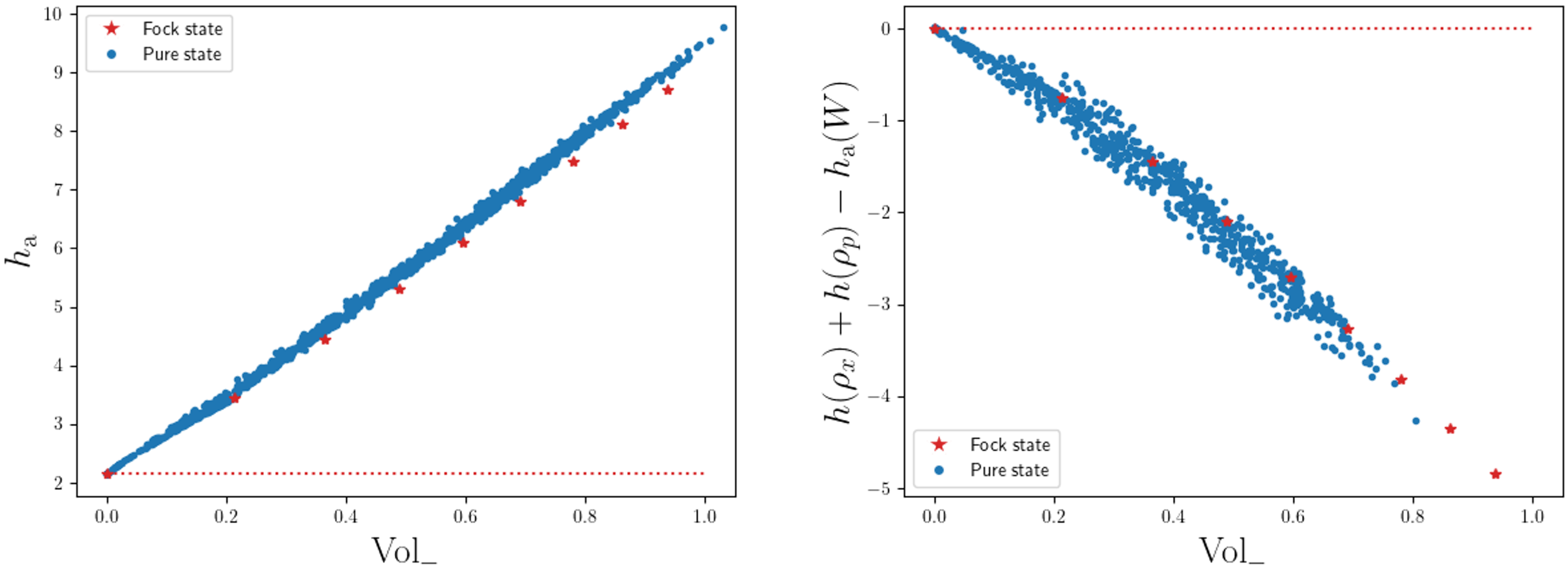}
\caption{
On the left, plot of the absolute Wigner entropy 
entropy $h_{a}(W)$ as a function of the negative volume  $\Vol_{-}(W)$.
Blue points are randomly generated pure states.
Red stars are Fock states.
All of our randomly generated states are such that $h_{\mathrm{}}(W)\geq\ln\pi+1$, which suggests that the absolute entropy satisfies property 5.
On the right, plot of the difference $h(\rho_x)+h(\rho_p)-h_{\mathrm{}}(W)$ for randomly generated pure states.
Since the quantity is not always non-negative, we conclude that the absolute entropy $h_+$ is not a lower bound on the sum of marginal entropies, contradicting property 4.
}
\label{fig:absolute_entropy}
\end{figure}

\begin{figure}[H]
\centering
\includegraphics[width=0.5\textwidth]{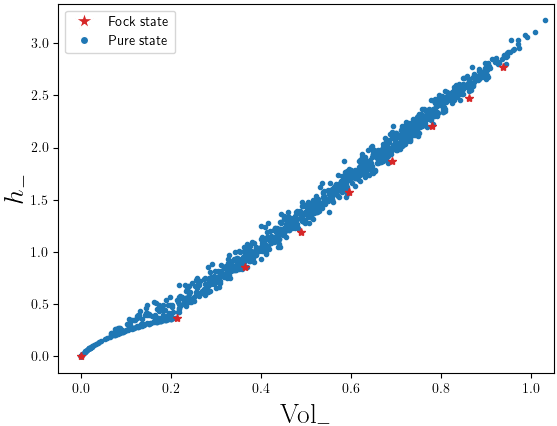}
\caption{
    Plot of the negative Wigner entropy $h_{-}(W)$ as a function of the negative volume $\Vol_{-}(W)$.
    Blue points are randomly generated pure states.
    Red stars are Fock states. 	
  }
    \label{fig:negative_entropy}
\end{figure}


\subsection{Relating the positive entropy to the sum of marginal entropies}
\label{sec:lower-bound}

Here, we make an attempt to show a relation between the positive entropy $h_{+}(W)$ and the sum of the marginal entropies $h({\rho_x})+h({\rho_p})$.
We saw in the previous section that the value $h_{+}(W)$ alone is not a lower bound on $h(\rho_x)+h(\rho_p)$, but it could be possible to find a correcting term $\Delta(W)$ such that $h(\rho_x)+h(\rho_p)\geq h_+(W)+\Delta(W)$.
We choose to focus on $h_+$ because its almost-linear relation with the negative volume suggests it is a quantity of interest.

\subsubsection{Splitting the positive and negative parts of the Wigner function}

Any Wigner function can be considered as the sum of its positive parts and negative parts.
With this in mind, we define $W^+$ and $W^-$ as follows:

\begin{equation*}
    W^+(x,p) = \max\lbrace W(x,p), 0\rbrace,
\qquad\qquad
    W^-(x,p) = -\min\lbrace W(x,p), 0\rbrace.
\end{equation*}
We can then write $W$ as $W(x,p)=W^+(x,p)-W^-(x,p)$.
The distribution $W^+(x,p)$ and $W^-(x,p)$ are non-negative and normalized to respectively $\Vol_+(W)$ and $\Vol_-(W)$.
Also, note that the positive entropy $h_{+}(W)$ simply corresponds to the entropy of the (non-normalized) distribution $W^+$, so that we have $h_{+}(W)=h(W^+)$.
Similarly, the negative entropy $h_{-}(W)$ is the entropy of the (non-normalized) distribution $W^{-}$ and we have $h_{-}(W)=h(W^{-})$.
We now define the marginal distributions of $W$, $W^+$ and $W^-$ along the $x$ and $p$ axis as follows:
\begin{align*}
	\rho_x(x)
	=
	\int W(x,p)\dd p,  &\qquad &\rho_p(p)
	=
	\int W(x,p)\dd x,
        \\
        \rho^+_x(x)
	=
	\int W^+(x,p)\dd p, &\qquad &  \rho^+_p(p)
	=
	\int W^+(x,p)\dd x,
        \\
        \rho^-_x(x)
	=
	\int W^-(x,p)\dd p, &\qquad &\rho^-_p(p)
	=
	\int W^-(x,p)\dd x.
\end{align*}
These different marginal distributions are related as $\rho_x(x) = \rho^+_x(x)-\rho^-_x(x)$ and $\rho_p(p) = \rho^+_p(p)-\rho^-_p(p)$.
Also, the marginal distributions are normalized according to the positive and negative volume: $\int \rho^+_x(x)\dd x=\int \rho^+_p(p)\dd p=\Vol_+(W)$ and $\int \rho^-_x(x)\dd x=\int \rho^-_p(p)\dd p=\Vol_-(W)$.

\subsubsection{Relating the marginal entropies}

Let us define the following marginal entropies:
\begin{align*}
	h(\rho_x)= -\int\rho_x(x)\ln\rho_x(x)\dd x, &\qquad & 	h(\rho_p)= -\int\rho_p(x)\ln\rho_p(p)\dd p,
        \\
        h(\rho^+_x)= -\int\rho^+_x(x)\ln\rho^+_x(x)\dd x,&\qquad &          h(\rho^+_p)= -\int\rho^+_p(p)\ln\rho^+_p(p)\dd p,
        \\
        h(\rho^-_x)= -\int\rho^-_x(x)\ln\rho^-_x(x)\dd x,&\qquad & h(\rho^-_p)= -\int\rho^-_p(p)\ln\rho^-_p(p)\dd p.
\end{align*}

We are going to derive a lower-bound on $h(\rho_x)$ as a function of $h(\rho_x^+)$ and $h(\rho_x^-)$.

\begin{align*}
	h(\rho_x)
	&=
	-\int
	\rho_x(x)
	\ln\rho_x(x)
	\dd x
	\\
	&=
	-\int
	\left(
	\rho^+_x(x)-\rho^-_x(x)
	\right)
	\ln
	\left(
	\rho^+_x(x)-\rho^-_x(x)
	\right)
	\dd x
	\\
	&\geq
	-\int
	\left(
	\rho^+_x(x)-\rho^-_x(x)
	\right)
	\ln
	\rho^+_x(x)
	\dd x
	\\
	&=
	-\int
	\rho^+_x(x)
	\ln
	\rho^+_x(x)
	\dd x
	+\int
	\rho^-_x(x)
	\ln
	\rho^+_x(x)
	\dd x
	\\
	&=
	h(\rho^+_x)
	+\int
	\rho^-_x(x)
	\ln
	\rho^+_x(x)
	\dd x
	\\
	&\geq
	h(\rho^+_x)
	+\int
	\rho^-_x(x)
	\ln
	\rho^-_x(x)
	\dd x
	\\
	&=
	h(\rho^+_x)-h(\rho^-_x)
\end{align*}

For both inequalities, we have used the monotonicity of logarithm and the fact that $\rho_x^+(x)\geq\rho_x^-(x)\geq 0$.
Note that the same development can be done for $h(\rho_p)$, so that we also have $h(\rho_p)\geq h(\rho_p^+)-h(\rho_p ^-)$.
In conclusion, it follows that:
\begin{align}
	h(\rho_x)+h(\rho_p)
	\geq
	h(\rho^+_x)+h(\rho^+_p)
	-h(\rho^-_x)-h(\rho^-_p).
 \label{eq:lower_bound_marg_entrop_with_pos_and_neg_marg}
\end{align}

\onecolumngrid
\subsubsection{Subadditivity of entropy for non-normalized distributions}

The subadditivity of entropy establishes an inequality between the entropy of a joint probability distribution and the entropy of its marginal distributions.
As we are going to show, it is possible to derive a similar relation for distributions which are non-negative but not normalized to $1$. 
Let us consider the function $W^+(x,p)$ which is non-negative and normalized to $\Vol_+(W)$.
It is associated with the marginal distributions $\rho_x^+(x)$ and $\rho_p^+(p)$.
Of course, keep in mind that the entropy of $W^{+}$ is the positive entropy of $W$, \textit{i.e.}, $h(W^+)=h_+(W)$.
In the following development, we will refer to it as $h(W^+)$ because we focus on the distribution $W_+$.

Ref. \cite{CoverThomas} provides a simple proof based on Jensen's inequality for the subbadditivity of entropy.
We follow the same lines, and generalize hereafter the proof to non-normalized distributions.
The entropy of $W^+$ is written as:
\begin{align}
	h(W^+)
	=
	-\iint W^+(x,p)
	\ln W^+(x,p)
	\dd x\dd p,
\end{align}
and the sum of the entropies of $\rho_x^+$ and $\rho_p^+$ is:
\begin{align*}
	h(\rho^+_x)+h(\rho^+_p)
	&=
	-\int
	\rho^+_x(x)\ln\rho^+_x(x)\dd x
	-\int\rho^+_p(p)\ln\rho^+_p(p)\dd p
	\\
	&=
	-\iint W^+(x,p)\dd p
	\ln\rho^+_x(x)\dd x
	-\iint W^+(x,p)\dd x
	\ln\rho^+_p(p)\dd p
	\\
	&=
	-\iint W^+(x,p)
	\left(
	\ln(\rho^+_x(x))
	+
	\ln(\rho^+_p(p))
	\right)
	\dd x\dd p
	\\
	&=
	-\iint W^+(x,p)
	\ln
	\left(
	\rho^+_x(x)
	\rho^+_p(p)
	\right)
	\dd x\dd p.
\end{align*}
We can then compute the difference of these two quantities as follows:
\begin{align*}
	h(\rho^+_x)+h(\rho^+_p)-h(W^+)
	&=
	-\iint W^+(x,p)
	\ln
	\left(
	\rho^+_x(x)
	\rho^+_p(p)
	\right)
	\dd x\dd p
	+
	\iint
	W^+(x,p)
	\ln
	W^+(x,p)
	\dd x\dd p
	\\
	&= 
	-\iint W^+(x,p)
	\left(
	\ln
	\left(
	\rho^+_x(x)
	\rho^+_p(p)
	\right)
	-\ln W^+(x,p)
	\right)
	\dd x\dd p
	\\
	&=
	-\iint W^+(x,p)
	\ln
	\left(
	\frac{\rho^+_x(x)\rho^+_p(p)}{W^+(x,p)}
	\right)
	\dd x\dd p
	\\
	&=
	-\Vol_+(W)
	\iint 
	\frac{W^+(x,p)}{\Vol_+(W)}
	\ln
	\left(
	\frac{\rho^+_x(x)\rho^+_p(p)}{W^+(x,p)}
	\right)
	\dd x\dd p
	\\
	&\geq
	-\Vol_+(W)
	\ln\left(
	\iint 
	\frac{W^+(x,p)}{\Vol_+(W)}
	\frac{\rho^+_x(x)\rho^+_p(p)}{W^+(x,p)}
	\dd x\dd p
	\right)
	\\
	&=
	-\Vol_+(W)
	\ln\left(
	\iint 
	\frac{\rho^+_x(x)\rho^+_p(p)}{\Vol_+(W)}
	\dd x\dd p
	\right)
	\\
	&=
	-\Vol_+(W)
	\ln\left(
	\frac{1}{\Vol_+(W)}
	\int 
	\rho^+_x(x)
	\dd x
	\int 
	\rho^+_p(p)
	\dd p
	\right)
	\\
	&=
	-\Vol_+(W)
	\ln\left(
	\frac{\Vol_+(W)^2}{\Vol_+(W)}
	\right)
	\\
	&=
	-\Vol_+(W)\ln\Vol_+(W)
\end{align*}
where the inequality follows from Jensen's inequality.
A similar development can be done for $h(W^-)$, which proves the two following relations:
\begin{align}
	h(\rho^+_x)+h(\rho^+_p)
	\geq
	h(W^+)-\Vol_+(W)\ln\Vol_+(W),
        \label{eq:lower_bound_hrho+}
	\\ \nonumber
	\\
	h(\rho^-_x)+h(\rho^-_p)
	\geq
	h(W^-)-\Vol_-(W)\ln\Vol_-(W).
         \label{eq:lower_bound_hrho-}
\end{align}

\subsubsection{Expression of a lower-bound on the sum of the marginal entropies}

Remember that, from Eq. \eqref{eq:lower_bound_marg_entrop_with_pos_and_neg_marg}, we would like to find a lower bound to the quantity $h(\rho_x^+)+h(\rho_p^+)-h(\rho_x^-)-h(\rho_p^-)$.
To do so, it would be sufficient to lower bound $h(\rho_x^+)+h(\rho_p^+)$ and upper bound $h(\rho_x^-)+h(\rho_p^-)$.
We can use Eq. \eqref{eq:lower_bound_hrho+} to lower bound $h(\rho_x^+)+h(\rho_p^+)$.
However, Eq. \eqref{eq:lower_bound_hrho-} provides us with a lower bound on $h(\rho_x^-)+h(\rho_p^-)$, which is not what we are looking for.

If we combine our precedent observations, we can write the following:
\begin{align}
	h(\rho_x)+h(\rho_p)
	&\geq
	h(\rho^+_x)+h(\rho^+_p)
	-h(\rho^-_x)-h(\rho^-_p)
	\\
	&\geq
	h(W^+)-\Vol_+(W)\ln\Vol_+(W)
	-h(\rho^-_x)-h(\rho^-_p)
	\\
	&=
	\underbrace{h_{+}(W)-\Vol_+(W)\ln\Vol_+(W)
	}_{\text{symmetric}}
	\underbrace{
	-h(\rho_x^-)-h(\rho_p^-)
	}_{\text{not symmetric}}
 \label{eq:analytical_lower_bound}
\end{align}

The lower bound that we have derived on the sum of the marginal entropies is constituted of two main terms.
The first term $h_{+}(W)-\Vol_+(W)\ln\Vol_+(W)$ is the sum of the positive entropy and a function of the positive volume.
That first term is thus symmetric, \textit{i.e.}, invariant under area-preserving transformation.
However, the second term $-h(\rho_x^-)-h(\rho_p^-)$ is not symmetric.
Indeed, the sum of the marginal entropies of a distribution is not invariant under area-preserving transformations.
For example, by squeezing the distribution $W^-$ along an axis that is not aligned with either the $x$ or $p$ axis, the value of $h(\rho_x^-)+h(\rho_p^-)$ can be set to an arbitrarily high value.

In conclusion, even if Eq. \eqref{eq:analytical_lower_bound} is an encouraging result as it provides an analytical lower bound on the sum of the marginal entropies for \textit{any} Wigner function (possibly taking negative values), work still remains to be done in order to make that lower bound completely symmetric.

\section{Complex Wigner entropy of Schrödinger cat states at the large-amplitude limit}
\label{sec:cat-states-high-alpha}

Any binary superposition of coherent states can be converted to the following canonical expression through some Gaussian unitary (displacement and rotation):
\begin{equation}
	\ket{\psi}
	=
	\frac{1}{\sqrt{C}}
	\left(
	\sqrt{m}
	\ket{\alpha}+
	e^{i\varphi}
	\sqrt{1-m}
	\ket{-\alpha}	
	\right),
        \label{eq:ket_cat_state_canonical}
\end{equation}
where $\alpha\in\mathbb{R}_+$, $\varphi\in[0, 2\pi)$ and $m\in [0,1]$.
$C$  is computed to $C=1+2\sqrt{m(1-m)}\exp(-2\alpha^2)\cos(\varphi)$.
Let us insist on the fact that we consider the parameter $\alpha$ to be real-valued and non-negative.
The projector of the state $\ket{\psi}$ is then:
\begin{align}
	\ket{\psi}\bra{\psi}
	=
	\frac{1}{C}\left(
	m\ket{\alpha}\bra{\alpha}
	+
	(1-m)\ket{-\alpha}\bra{-\alpha}
	+
	\sqrt{m(1-m)}e^{i\varphi}\ket{-\alpha}\bra{\alpha}
	+
	\sqrt{m(1-m)}e^{-i\varphi}\ket{\alpha}\bra{-\alpha}
	\right)
 \label{eq:projector_canonical_cat_state}
\end{align}
In this Appendix, we are going to compute the Wigner function of $\ket{\psi}\bra{\psi}$.
Then, we will compute the real and imaginary parts of the complex Wigner entropy in the limit of large $\alpha$.

\subsection{Wigner function}

Let us define $W_{\alpha\beta}(x,p)$ has the Wigner function of the operator $\ket{\alpha}\bra{\beta}$ (in the coherent basis).
The expression of $W_{\alpha\beta}$ can be found in \cite{ZVHPhD2018} and reads as follows:
\begin{align}
W_{\alpha\beta}(x, p)=\frac{1}{\pi} \exp \left(-\frac{1}{2}\left(|\alpha|^2+|\beta|^2-2 \alpha \beta^*\right)-\left(x-\frac{\alpha+\beta^*}{\sqrt{2}}\right)^2-\left(p-\frac{\alpha-\beta^*}{\sqrt{2} i}\right)^2\right)
\end{align}

We can then use the above expression to compute the Wigner function of the operators $\ket{\alpha}\bra{\alpha}$, $\ket{-\alpha}\bra{-\alpha}$, $\ket{\alpha}\bra{-\alpha}$ and $\ket{-\alpha}\bra{\alpha}$.
The Wigner function of $\ket{\psi}\bra{\psi}$ will then be the corresponding mixture (with complex coefficients) as defined from Eq. \eqref{eq:projector_canonical_cat_state}.
Note that since $\ket{\psi}\bra{\psi}$ is Hermitian, its Wigner function is real-valued.
This yields the following expression:
\begin{align}
    W(x,p)
    =&
    \frac{1}{\pi C}\Bigg(
    m\exp(-(x-\sqrt{2}\alpha)^2-p^2)
    +
    (1-m)\exp(-(x+\sqrt{2}\alpha)^2-p^2)
    \nonumber
    \\&+
    2\sqrt{m(1-m)}
    \exp(-x^2-p^2)
    \cos\left(
    2\sqrt{2}\alpha p
    +\varphi
    \right)
    \Bigg)
    \\
    =&
    \frac{m}{C}
    W_0\left(x-\sqrt{2}\alpha,p\right)
    +
    \frac{1-m}{C}
    W_0\left(x+\sqrt{2}\alpha,p\right)
    +\frac{2\sqrt{m(1-m)}}{C}
    W_0(x,p)
    \cos(2\sqrt{2}\alpha p+\varphi)
    \label{eq:wig_cat_as_function_of_w0}
\end{align}
where in the second expression we have used the Wigner function of vacuum $W_0(x,p)=\exp(-x^2-p^2)/\pi$.

\subsection{Complex Wigner entropy in the large $\alpha$ approximation}
\label{subsec:compl_wigent_in_large_alpha_approx}

Eq. \eqref{eq:wig_cat_as_function_of_w0} is the exact Wigner function of a state $\ket{\psi}$ in the form of Eq. \eqref{eq:ket_cat_state_canonical}.
At this point, we will now make the assumption that $\alpha\gg 1$.
This implies then that $C\approx 1$.
We can then approximate Eq. \eqref{eq:wig_cat_as_function_of_w0} as follows:
\begin{equation}
	W(x,p)
	\approx
	\underbrace{
	m
	W_0\left(x-\sqrt{2}\alpha,p\right)
	}_{W_A}
	+
	\underbrace{
	(1-m)
	W_0\left(x+\sqrt{2}\alpha,p\right)
	}_{W_B}
	+
	\underbrace{
	2\sqrt{m(1-m)}
	W_0(x,p)
	\cos(2\sqrt{2}\alpha p+\varphi)
	}_{W_C}
\end{equation}

Observe that the above expression contains three distinct terms.
Notice that the sum of the two first term ($W_A+W_B$) is the Wigner function of the mixture of coherent states $\hat{\rho}= m\ket{\alpha}\bra{\alpha}+(1-m)\ket{-\alpha}\bra{-\alpha}$.
We can then interpret the term $W_C$ as the interference term of the cat state.

The term $W_A$ is located around the point $(x=\sqrt{2}\alpha, p=0)$, the term $W_B$ is located around the the point $(x=-\sqrt{2}\alpha, p=0)$ and the term $W_C$ around the point $(x=0,p=0)$.
In the large $\alpha$ approximation, the distributions $W_A(x,p)$, $W_B(x,p)$ and $W_C(x,p)$ are thus well separated over phase space, so that we may consider that they do not overlap.
In other words, we can make the approximation that whenever one takes non-zero values, then the two others are assumed to be zero.
We will use that approximation to compute the complex Wigner entropy of the cat state.

\begin{align}
	h(W)
        &=
        h(W_A+W_B+W_C)
	\\&=
	-\iint W\ln W\dd x\dd p
	\\&=
	-\iint\left(W_A+W_B+W_C\right)\ln\left(W_A+W_B+W_C\right)\dd x\dd p
	\\&\approx
	-\iint\limits_{W_A\neq 0} W_A\ln W_A\dd x\dd p
	-\iint\limits_{W_B\neq 0} W_B\ln W_B\dd x\dd p
	-\iint\limits_{W_C\neq 0} W_C\ln W_C\dd x\dd p
	\\&=
	h(W_A)+h(W_B)+h(W_C)
\end{align}

For distributions that do not overlap, the entropy of the sum of the distributions is the sum of the entropy of each distribution.
Using the fact that $h(W_0)=\ln\pi+1$ and that $h(aW)=ah(W)-a\ln a$ (for $a>0$), we find that $h(W_A)=m(\ln\pi+1)-m\ln m$ and $h(W_B)=(1-m)(\ln\pi+1)-(1-m)\ln(1-m)$.
We have then:
\begin{align}
    h(W_A+W_B)
    &\approx h(W_A)+h(W_B)+h(W_C)
    \\&=
    \ln\pi+1-m\ln m-(1-m)\ln(1-m)
    \\&=
    h(W_0)+h_2(m)
\end{align}
where $h_2(m)\vcentcolon=-m\ln m-(1-m)\ln(1-m)$ is the binary entropy of $m\in[0,1]$.
As we noticed previously, the entropy of $h(W_A+W_B)$ is simply the Wigner entropy of a mixture of coherent states. 
Let us now focus on the entropy of the interference term, \textit{i.e.} $h(W_C)$.

Starting from the expression of $W_C$:
\begin{align}
	W_C(x,p)
	=
	\frac{2}{\pi}\sqrt{m(1-m)}
	\exp(-x^2)\exp(-p^2)\cos(2\sqrt{2}\alpha p+\varphi),
\end{align}
we compute the entropy of $W_C$ as follows:
\begin{align}
    h(W_C)
    =&
    -\iint
    \frac{2}{\pi}\sqrt{m(1-m)}
    \exp(-x^2)\exp(-p^2)\cos(2\sqrt{2}\alpha p+\varphi)
    \nonumber
    \\
    &\times
    \ln\left(
    \frac{2}{\pi}\sqrt{m(1-m)}
    \exp(-x^2)\exp(-p^2)\cos(2\sqrt{2}\alpha p+\varphi)
    \right)\dd x\dd p
    \\
    =&
    -\iint
    \frac{2}{\pi}\sqrt{m(1-m)}
    \exp(-x^2)\exp(-p^2)\cos(2\sqrt{2}\alpha p+\varphi)
    \nonumber
    \\
    &\times
    \left[
    \ln\left(\frac{2}{\pi}\sqrt{m(1-m)}\right)-x^2-p^2+\ln\cos(2\sqrt{2}\alpha p+\varphi)
    \right]
    \dd x\dd p
    \\
    =&
    -\frac{2}{\pi}\sqrt{m(1-m)}\ln\left(\frac{2}{\pi}\sqrt{m(1-m)}\right)
    \underbrace{\int\exp(-x^2)\dd x}_{\sqrt{\pi}}
    \underbrace{\int\exp(-p^2)\cos(2\sqrt{2}\alpha p+\varphi)\dd p}_{\exp(-2\alpha^2)\sqrt{\pi}\cos\varphi}
    \nonumber
    \\&
    +\frac{2}{\pi}\sqrt{m(1-m)}
    \underbrace{\int x^2 \exp(-x^2)\dd x}_{\frac{1}{2}\sqrt{\pi}}
    \underbrace{\int\exp(-p^2)\cos(2\sqrt{2}\alpha p+\varphi)\dd p}_{\exp(-2\alpha^2)\sqrt{\pi}\cos\varphi}
    \nonumber
    \\&
    +\frac{2}{\pi}\sqrt{m(1-m)}\int 
    \underbrace{\exp(-x^2)\dd x}_{\sqrt{\pi}}
    \underbrace{\int p^2\exp(-p^2)\cos(2\sqrt{2}\alpha p+\varphi)\dd p}_{\frac{1}{2}(1-4\alpha^2)\exp(-2\alpha^2)\sqrt{\pi}\cos\varphi}
    \nonumber
    \\&
    -\frac{2}{\pi}\sqrt{m(1-m)}
    \underbrace{\int\exp(-x^2)\dd x}_{\sqrt{\pi}}
    \;\exp(-p^2)\ln\cos(2\sqrt{2}\alpha p+\varphi)
    \cos(2\sqrt{2}\alpha p+\varphi)\dd p
    \\=&
    \sqrt{m(1-m)}\exp(-2\alpha^2)\cos\varphi
    \left(
    -2\ln\left(\frac{2}{\pi}\sqrt{m(1-m)}\right)+2-4\alpha^2	
    \right)
    \nonumber
    \\
    &
    -\frac{2}{\sqrt{\pi}}\sqrt{m(1-m)}
    \int\exp(-p^2)\ln\cos(2\sqrt{2}\alpha p+\varphi)
    \cos(2\sqrt{2}\alpha p+\varphi)\dd p
    \label{eq:wigner_entropy_interference_term}
\end{align}

Now, in the limit $\alpha\rightarrow\infty$, we have $\exp(-2\alpha^2)\rightarrow 0$.
In the large $\alpha$ approximation, only the second term of Eq. \eqref{eq:wigner_entropy_interference_term} survives and we have:
\begin{align}
    h(W_C)\approx
    -\frac{2}{\sqrt{\pi}}\sqrt{m(1-m)}
    \int
    \exp(-p^2)
    \ln\cos(2\sqrt{2}\alpha p+\varphi)
    \cos(2\sqrt{2}\alpha p+\varphi)\dd p.
    \label{eq:h_wc_approx_second_term}
\end{align}

In the following, we are going to split the real and imaginary parts of $h(W_C)$.
We will use the relation $\ln z=\ln\abs{z}+i\arg(z)$.
Also, we set $q=2\sqrt{2}\alpha p+\varphi$, so that $p=(q-\varphi)/(2\sqrt{2}\alpha)$.
We have then $\dd p=\dd q/(2\sqrt{2}\alpha)$.
Using this and starting from Eq. \eqref{eq:h_wc_approx_second_term}, we find:
\begin{align}
    h(W_C)
    \approx&
    -\frac{2}{\sqrt{\pi}}\sqrt{m(1-m)}
    \int\ln\cos(q)
    \exp(-\left(\frac{q-\varphi}{2\sqrt{2}\alpha}\right)^2)\cos(q)\frac{\dd q}{2\sqrt{2}\alpha}
    \\=&
    -\frac{1}{\alpha}\sqrt{\frac{m(1-m)}{2\pi}}
    \int
    \exp(-\left(\frac{q-\varphi}{2\sqrt{2}\alpha}\right)^2)
    \cos(q)\ln\cos(q)\dd q
    \\=&
    -\frac{1}{\alpha}\sqrt{\frac{m(1-m)}{2\pi}}
    \int
    \exp(-\left(\frac{q-\varphi}{2\sqrt{2}\alpha}\right)^2)
    \cos(q)\ln\abs{\cos(q)}\dd q
    \nonumber
    \\&
    -\frac{i}{\alpha}\sqrt{\frac{m(1-m)}{2\pi}}
    \int
    \exp(-\left(\frac{q-\varphi}{2\sqrt{2}\alpha}\right)^2)
    \cos(q)\arg(\cos(q))\dd q
    \label{eq:h_wc_real_and_imag}
\end{align}

We are going to evaluate the real part and the imaginary part of this integral separately.
To do so, we are going to split the domain of integration of the variable $p$ (which is $\mathbb{R}$) as an infinite countable union of subsets of length $2\pi$.
We divide the real line as follows: $\mathbb{A}_k=\big[2k\pi, 2(k+1)\pi\big)$, which is such that $\bigcup_{k\in\mathbb{Z}}\mathbb{A}_k=\mathbb{R}$.

\subsubsection{Real part}
Starting from Eq. \eqref{eq:h_wc_real_and_imag}, we can express the real part of $h(W_C)$ (in the large $\alpha$ approximation) as follows:
\begin{align}
    \mathrm{Re}\left[h(W_C)\right]
    \approx&
    -\frac{1}{\alpha}\sqrt{\frac{m(1-m)}{2\pi}}
    \int
    \exp(-\left(\frac{q-\varphi}{2\sqrt{2}\alpha}\right)^2)
    \cos(q)\ln\abs{\cos(q)}\dd q
    \\=&
    -\frac{1}{\alpha}\sqrt{\frac{m(1-m)}{2\pi}}
    \sum_{k\in\mathbb{Z}}\;
    \int_{\mathbb{A}_k}
    \exp(-\left(\frac{q-\varphi}{2\sqrt{2}\alpha}\right)^2)
    \cos(q)\ln\abs{\cos(q)}\dd q
\end{align}

When $\alpha$ is very large, the Gaussian $\exp(-((q-\varphi)/(2\sqrt{2}\alpha))^2)$ varies very slowly, so that we can assume that on a single interval $\mathbb{A}_k$ it keeps the same value.
The middle value of the interval $\mathbb{A}_k$ is $(2k+1)\pi$.
We can then write:
\begin{align}
    \mathrm{Re}\left[h(W_C)\right]
    \approx&
    -\frac{1}{\alpha}\sqrt{\frac{m(1-m)}{2\pi}}
    \sum\limits_{k\in\mathbb{Z}}
    \exp(-\left(\frac{(2k+1)\pi-\varphi}{2\sqrt{2}\alpha}\right)^2)
    \underbrace{\int\limits_{\mathbb{A}_k}
    \cos(q)\ln\abs{\cos(q)}\dd q}_{=0}
    \\
    &=
    0
\end{align}
where we have use the fact that the function $\phi(x)=x\ln(\abs{x})$ is odd, so that $\phi(\cos(x))$ is on average equal to zero over a period of cosine.

\subsubsection{Imaginary part}
We proceed similarly for the imaginary part.

\begin{align}
    \mathrm{Im}\left[h(W_C)\right]
    \approx&
    -\frac{1}{\alpha}\sqrt{\frac{m(1-m)}{2\pi}}
    \int
    \exp(-\left(\frac{q-\varphi}{2\sqrt{2}\alpha}\right)^2)
    \cos(q)\arg(\cos(q))\dd q
    \\=&
    -\frac{1}{\alpha}\sqrt{\frac{m(1-m)}{2\pi}}
	\sum\limits_{k\in\mathbb{Z}}\;
	\int\limits_{\mathbb{A}_k}
	\exp(-\left(\frac{q-\varphi}{2\sqrt{2}\alpha}\right)^2)
	\cos(q)\arg(\cos(q))\dd q
    \\\approx&
    -\frac{1}{\alpha}\sqrt{\frac{m(1-m)}{2\pi}}
	\sum\limits_{k\in\mathbb{Z}}
	\exp(-\left(\frac{(2k+1)\pi-\varphi}{2\sqrt{2}\alpha}\right)^2)
	\int\limits_{\mathbb{A}_k}
	\cos(q)\arg(\cos(q))\dd q
	\\
	=&
    -\frac{1}{\alpha}\sqrt{\frac{m(1-m)}{2\pi}}
    \underbrace{\sum\limits_{k\in\mathbb{Z}}
	\exp(-\left(\frac{(2k+1)\pi-\varphi}{2\sqrt{2}\alpha}\right)^2)}_{\sqrt{\frac{2}{\pi}}\alpha\vartheta_3\left(\frac{\pi-\varphi}{2},\exp(-2\alpha^2)\right)}
	\underbrace{\int\limits_{\frac{\pi}{2}}^{\frac{3\pi}{2}}
	\cos(q)\pi\dd q}_{-2\pi}
    \\
    =&
    2\sqrt{m(1-m)}\;\vartheta_3\left(\frac{\pi-\varphi}{2},\exp(-2\alpha^2)\right),
\end{align}
where $\vartheta_3(u,q)=1+\sum_{n=1}^{\infty}q^{n^2}\cos(2nu)$ is the third Jacobi elliptic theta function.
Observe that $\vartheta_3(u, 0)=1$, so that in the limit $\alpha\rightarrow\infty$ we have $\mathrm{Im}\left[h(W_C)\right]\rightarrow 2\sqrt{m(1-m)}$.

Let us now conclude.
In the limit of $\alpha\gg 1$, we have found that $\mathrm{Re}\left[h(W_C)\right]\rightarrow 0$ and $\mathrm{Im}\left[h(W_C)\right]\rightarrow 2\sqrt{m(1-m)}$.
Since we also know that in the limit $\alpha\gg 1$, we can write $h_{\mathrm{c}}(W)\approx h_{\mathrm{c}}(W_A)+h_{\mathrm{c}}(W_B)+h_{\mathrm{c}}(W_C)$, we have the following:
\begin{align}
	h_{\mathrm{c}}(W)
	&\approx
	\ln\pi+1-m\ln m-(1-m)\ln(1-m)
	+2i\sqrt{m(1-m)}
        \\
        &=
        h_{\mathrm{c}}(W_0)+h_2(m)+2i\sqrt{m(1-m)}
        \label{eq:compl_wig_entr_cat_state}
\end{align}
where $W_0$ is the Wigner function of vacuum and $h_2$ is the binary entropy.
Observe that Eq. \eqref{eq:compl_wig_entr_cat_state} does not depend on $\varphi$.
Also, the imaginary part is maximized for $m=1/2$, so that for a balanced superposition of coherent states (cat state), we find the following expression:
\begin{align}
	h(W_\mathrm{cat})
	\approx
	\ln\pi+1+\ln 2
	+i,
        \label{eq:wigent_catstate}
\end{align}
where $W_{\mathrm{cat}}$ is the Wigner function of a balanced cat state $\ket{\psi_\mathrm{cat}}\propto\ket{\alpha}+e^{i\varphi}\ket{-\alpha}$ (for any phase $\varphi\in[0, 2\pi)$).

\subsection{Generalization to a balanced superposition of many coherent states}

Interestingly, the expression we obtained for the complex Wigner entropy of a cat state (balanced superposition of two coherent states), see Eq. \eqref{eq:wigent_catstate}, can be generalized to a balanced superposition of $N$ coherent states.
Let us consider the following pure state:
\begin{align}
    \ket{\psi}
    =
    \frac{1}{\sqrt{C_{\bm{\alpha},\bm{\varphi}}}}
    \sum\limits_{n=1}^{N}
    e^{i\varphi_n}
    \ket{\alpha_i}
\end{align}
which is defines from a coherent amplitudes vector $\bm{\alpha}\in\mathbb{C}^{N}$ and a phase vector $\bm{\varphi}\in[0,2\pi)^{N}$.
The normalization constant $C_{\bm{\alpha},\bm{\varphi}}\in\mathbb{R}_+$ depends on both $\bm{\alpha}$ and $\bm{\varphi}$.
The density operator associated to that state is:
\begin{align}
    \ket{\psi}\bra{\psi}
    &=
    \frac{1}{C_{\bm{\alpha},\bm{\varphi}}}
    \sum\limits_{n=1}^{N}\sum\limits_{m=1}^{N}
    e^{i(\varphi_n-\varphi_m)}
    \ket{\alpha_n}\bra{\alpha_m},
    \\
    &=
    \frac{1}{C_{\bm{\alpha},\bm{\varphi}}}
    \sum\limits_{n}
    \ket{\alpha_n}\bra{\alpha_n}
    +
    \frac{1}{C_{\bm{\alpha},\bm{\varphi}}}
    \sum\limits_{n\neq m}
    e^{i(\varphi_n-\varphi_m)}
    \ket{\alpha_n}\bra{\alpha_m},
    \\
    &=
    \underbrace{
    \frac{1}{C_{\bm{\alpha},\bm{\varphi}}}
    \sum\limits_{n}
    \ket{\alpha_n}\bra{\alpha_n}
    }_{\textrm{mixture}}
    +
    \underbrace{
    \frac{1}{C_{\bm{\alpha},\bm{\varphi}}}
    \sum\limits_{m<n}
    \left[
    e^{i(\varphi_n-\varphi_m)}
    \ket{\alpha_n}\bra{\alpha_m}
    +h.c.
    \right]
    }_{\textrm{interference}}.
\end{align}

Observe that there are $N$ mixture terms and $N(N-1)/2$ interference terms.
In the approximation where none of the mixture terms and interference terms overlap, we will be able to use a similar reasoning as we did in \ref{subsec:compl_wigent_in_large_alpha_approx}.
There are in total $N(N+1)/2$ terms (mixture and interference) that shouldn't overlap for the approximation to hold.
This amount to checking $(N-1)N(N+1)(N+2)/8$ conditions (distances between each terms).

Notice now that every mixture term is located around the phase-space location $\alpha_n$ (for $n=1,..,N$), and every interference term is located at the location $(\alpha_n+\alpha_m)/2$ (for $n=1,...,N$ and $m=1,..,n-1$).
We define the matrix $\mathbf{A}\in\mathbb{C}^{N\times N}$ with components $A_{mn}=(\alpha_m+\alpha_n)/2$.
The conditions that have to be satisfied are the following,
\begin{align}
    \abs{
    A_{mn}-A_{pq}
    }\gg 1
    \qquad\forall 
    m\geq n> p\geq q,
\end{align}
or equivalently,
\begin{align}
    \abs{
    \alpha_m+\alpha_n-\alpha_p-\alpha_q
    }\gg 1
    \qquad\forall 
    m\geq n> p\geq q.
\end{align}

When the above approximation holds, we can compute the total complex Wigner entropy of $\ket{\psi}$ as the sum of the complex entropy of each terms.
Following the same lines as in \ref{subsec:compl_wigent_in_large_alpha_approx}, we find that the normalization constant $C_{\bm{\alpha},\bm{\varphi}}$ then tends to $N$.
The complex Wigner entropy of the sum of the mixture terms becomes $h(W_0)+\ln N$ (no imaginary part).
The complex Wigner entropy of the operator $\ket{\alpha_m}\bra{\alpha_n}$ (where $\abs{\alpha_m-\alpha_n}\gg 1$) is $2i$ (no real part), so that the complex Wigner entropy of the sum of the $N(N-1)/2$ interference terms is $2i\times N(N-1)/(2N)=i(N-1)$.
This yields the following approximation for the complex Wigner entropy of $\ket{\psi}$:
\begin{align}
h_{\mathrm{c}}(W_{\psi})
\approx
\underbrace{
\ln\pi+1+\ln N}_{\textrm{mixture}}
+
\underbrace{i\, (N-1)}_{\textrm{interference}}.
\end{align}

In our example, it appears that the real part of $h_{\mathrm{c}}$ measures the phase-space uncertainty of the state when all the quantum interferences have been lost (in which case the state is quasiclassical, \textit{i.e.}, a mixture of coherent states).
On the other hand, the imaginary part of $h_{\mathrm{c}}$ measures the average number of almost distinguishable quasiclassical components that are interfering with any
quasiclassical
component of the superposition state.

\bigskip

\end{document}